\newcommand{\scell}[2][c]{%
  \begin{tabular}[#1]{@{}c@{}}#2\end{tabular}}
\newcommand{\dotcup}[0]{\,\dot\cup\,}
\newcommand{\Subterms}[0]{\mathit{Subterms}}
\def\int{\mbox{{\it int}}}
\def\cond{{\mbox{{\it Cond}}}}
\def\term{{\mbox{{\it Term}}}}
\def\intconst{{\mbox{{\it Const}}}}
\def\ITE{{\mbox{{\it ITE}}}}
\newcommand{\comment}[1]{}
\newcounter{myctr}
\newenvironment{myitemize}{\begin{list}{$\bullet$}
{\setlength{\topsep}{1mm}\setlength{\itemsep}{0.25mm}
\setlength{\parsep}{0.1mm}
\setlength{\itemindent}{0mm}\setlength{\partopsep}{0mm}
\setlength{\labelwidth}{15mm}
\setlength{\leftmargin}{4mm}}}{\end{list}}
\newcommand{\rig}[0]{\vdash^r}
\newcommand{\ahat}[1]{\expandafter\hat#1}
\newcommand{\restr}[0]{\upharpoonright}
\newcommand{\smvertspace}{\vspace*{-2mm}}
\author{Benjamin Caulfield${}^1$, Markus N. Rabe${}^1$, Sanjit A. Seshia${}^1$, \\ and Stavros Tripakis${}^{1,2}$}
\institute{${}^1$University of California, Berkeley\\${}^2$Aalto University} % Department of Electrical Engineering and Computer Science
\title{What's Decidable about\\ Syntax-Guided Synthesis?}
\begin{document}

\maketitle

\begin{abstract}
	Syntax-guided synthesis (SyGuS) is a recently proposed fra\-mework for program synthesis problems. 
	The SyGuS problem is to find an expression or program generated by a given grammar that meets a correctness specification. 
	Correctness specifications are given as formulas in suitable logical theories, typically amongst those 
        studied in satisfiability modulo theories (SMT). 

	\indent In this work, we analyze the decidability of the SyGuS problem for different classes of grammars and correctness specifications. 
%	The rest of the paper thus focuses on more restricted grammars, such as regular tree grammars. 
	We prove that the SyGuS problem is undecidable for the theory of equality with uninterpreted functions (EUF).
	We identify a fragment of EUF, which we call \emph{regular-EUF}, for which the SyGuS problem is decidable.
	We prove that this restricted problem is EXPTIME-complete and that the sets of solution expressions are precisely the regular tree languages.
	For theories that admit a unique, finite domain, we give a general algorithm to solve the SyGuS problem on tree grammars. 
	Finite-domain theories include the bit-vector theory without concatenation. 
	We prove SyGuS undecidable for a very simple bit-vector theory with concatenation, both for context-free grammars and for tree grammars. 
        Finally, we give some additional results for linear arithmetic and bit-vector arithmetic along with a discussion of the implication of these results.
\end{abstract}

%%%%%%%%%%%%%%%%%%%%%%%%%%%%%%%%%%%%%%%%%%%%%%%%%%%%%%%%%%%%%%%%%%%%%%%%%%%%%%%%%%%%%
\section{Introduction}
\label{sec:intro}

{\em Program synthesis} is an area concerned with the automatic generation of
a program from a high-level specification of correctness. The specification may either
be total, e.g., in the form of a simple but unoptimized program, or partial, e.g., in 
the form of a logical formula or even a collection of test cases. Regardless, one
can typically come up with a suitable logic in which to formally capture the class of 
specifications. 
Traditionally, program synthesis has been viewed as a deductive process, wherein
a program is derived from the constructive proof of the theorem that states that for all inputs,
there exists an output, such that the desired correctness specification holds~\cite{MannaWaldinger80},
with no assumptions made about the syntactic form of the program.
However, over the past decade, there has been a successful trend in synthesis in which, in addition to
the correctness specification, one also supplies a hypothesis about the syntactic 
form of the desired program.
Such a hypothesis can take many forms: partial programs with ``holes''~\cite{sketching:pldi05,sketching:asplos06},
component libraries~\cite{jha-icse10,gulwani-pldi11}, 
protocol scenarios~\cite{udupa-pldi13,ScenariosHVC2014}, etc.
Moreover, the synthesis of verification artifacts, such as invariants~\cite{colon-cav03}, 
also makes use of ``templates'' constraining their syntactic structure. The intuition is that
such syntactic restrictions on the form of the program reduce the search space for 
the synthesis algorithms, and thus speed up the overall synthesis or verification process.

{\em Syntax-guided synthesis} (SyGuS)~\cite{alur-fmcad13} is a recently-proposed formalism
that captures this trend as a new class of problems. More precisely, a SyGuS problem comprises
a logical specification $\varphi$ in a suitable logical theory $T$ 
that references one or more typed function symbols $f$ that
must be synthesized, along with one or more formal languages $L$ of expressions of the same type
as $f$, with the goal of finding expressions $e \in L$ such that when $f$ is replaced
by $e$ in $\varphi$, the resulting formula is valid in $T$.
The formal language $L$ is typically given in the form of a grammar $G$.
Since the SyGuS definition was proposed about three years ago, it has been adopted by
several groups as a unifying formalism for a class of synthesis efforts, with a standardized
language (Synth-LIB) and an associated annual competition.
However, the theoretical study of SyGuS is still in its infancy.
%from a purely theoretical perspective, we are still in the process of
%development a deeper understanding of the SyGuS problem.
Specifically, to our knowledge, there are no published results about the decidability or complexity 
of syntax-guided synthesis for specific logics and grammars.
\begin{table}[b]
\centering
\smvertspace
\begin{tabular}{|c|c|c|c|}
\hline 
 Theory $\backslash$ Grammar Class & Regular Tree & Context-free \\ \hline
 Finite-Domain & D & U \\ \hline 
 Bit-Vectors & U & U \\ \hline 
 Arrays  & U & U \\ \hline 
 EUF  & U & U \\ \hline
 Regular-EUF  & D & ? \\ \hline
\end{tabular}
\vspace{4pt}
\caption{Summary of main results, organized by background theories and classes of grammars. 
``U'' denotes an undecidable SyGuS class, ``D'' denotes a decidable class,
and ``?'' indicates that the decidability is currently unknown.}
\label{tbl:main-results-summary}
\end{table}
In this paper, we present a theoretical analysis of the syntax-guided synthesis problem.
We analyze the decidability of the SyGuS problem for different classes of grammars and logics. 
For grammars, we consider arbitrary context-free grammars, tree grammars, and
grammars specific to linear real arithmetic and linear integer arithmetic. 
For logics, we consider the major theories studied in satisfiability modulo theories (SMT)~\cite{barrett-smtbookch09},
including equality and uninterpreted functions (EUF),
finite-precision bit-vectors (BV), and
arrays -- extensional or otherwise (AR), as well as theories with finite domains (FD).
Our major results are as follows:
\begin{myitemize}
\item For EUF, we show that the SyGuS problem is undecidable over tree grammars. These results extend straightforwardly for 
the theory of arrays. (See Section~\ref{sec:euf}.)
\item We present a fragment of EUF, called \emph{regular-EUF}, for which the SyGuS problem is EXPTIME-complete given regular tree grammars. 
We prove that the sets of solution to regular-EUF problems are in one-to-one correspondence with the regular tree languages.
(See Section~\ref{sec:reg-euf}.)
\item For arbitrary theories with finite domains (FD) defined in Section~\ref{sec:fd}, we show that the SyGuS problem is decidable for tree grammars, but undecidable for arbitrary context-free grammars.
\item For BV, we show (perhaps surprisingly) that the SyGuS problem is undecidable for
the classes of context-free grammars and tree grammars. (See Section~\ref{sec:bv}.)
\end{myitemize}
See Table~\ref{tbl:main-results-summary} for a summary of our main results.
In addition, we also consider certain restricted grammars specific to 
the theory of linear arithmetic over the reals and integers (LRA and LIA),
as well as bit-vectors (BV) where the grammars generate arbitrary but well-formed expressions
in those theories and discuss the decidability of the problem in Section~\ref{sec:misc}.
The paper concludes in Section~\ref{sec:discuss} with a discussion of the results, their implications,
and directions for future work.

%\begin{figure}
%\centering
%\begin{tabular}{|c|c|c|c|}
%\hline 
% Theory $\backslash$ Grammar Class & Regular& Tree & Context-free  \\ \hline
% Finite-Domain & ? & D & U \\ \hline 
% Bit-Vector & ? & ? & U \\ \hline 
% DL & ? & ? & U  \\ \hline
% LRA & ? & ? & U  \\ \hline 
% LIA & ? & ? & U  \\ \hline
% Arrays & U & U & U \\ \hline 
% EUF & U & U & U \\ \hline
%\end{tabular}
%\caption{The results ordered by background theories and classes of grammars.}
%\label{fig:my_label}
%\end{figure}

%Contributions
%\begin{itemize}
%	\item SyGuS for EUF is undecidable, even for severely restricted grammars. 
%	\item SyGuS for LIA and LRA is undecidable for arbitrary context-free grammars. 
%	\item For decidable theories over finite domains (assuming the theories are a fragment of first-order), SyGuS is decidable. 
%\end{itemize}

%Let $\lambda$ represent the empty string. 
%G is a grammar, we represent it by production rules (non-terminals, terminals, etc. are infered by these rules) 

%%%%%%%%%%%%%%%%%%%%%%%%%%%%%%%%%%%%%%%%%%%%%%%%%%%%%%%%%%%%%%%%%%%%%%%%%%%%%%%%%%%%%
\smvertspace
\section{Preliminaries}
\label{sec:prelim}
\smvertspace

%We survey basic definitions and notation on logical theories,
%syntax-guided synthesis, and grammars.
% 
%\subsection{Logical Theories}
%\label{sec:logics}

%\section{Background}

%TAKEN FROM BACKGROUND:
%We review some key definitions of logical theories used in
%satisfiability modulo theories (SMT);
%see~\cite{barrett-smtbookch09,hodges1997shorter} for more details. 
%\cite{hodges1997shorter}
We review some key definitions and results used in the rest of the paper. 

\smvertspace
%{\bf Term Rewriting Systems.}~\cite{baader1999term} 
\subsubsection*{Terms and Substitutions}
We follow the book by Baader and Nipkow~\cite{baader1999term}. 
A \emph{signature} (or \emph{ranked alphabet}) $\Sigma$ consists of a set of \emph{function} symbols with an associated \emph{arity}, a
non-negative number indicating the number of arguments.  
For example $\Sigma = \{ f:2, a:0, b:0\}$ consists of binary function symbol $f$ and constants $a$ and $b$.
For any arity $n \ge 0$, we let $\Sigma^{(n)}$ denote the set of function symbols with arity $n$ (the $n$-ary symbols). 
We will refer to the $0$-ary function symbols as \emph{constants}. 

For any signature $\Sigma$ and set of \emph{variables} $X$ such that $\Sigma \cap X = \emptyset$, we define the set $T(\Sigma, X)$ of $\Sigma$-terms over $X$ inductively as the smallest set satisfying:
\begin{itemize}
\item $\Sigma^{(0)},X \subseteq T(\Sigma, X)$
\item  For all $n \ge 1$, all $f \in \Sigma^{(n)}$, and all $t_1, \dots, t_n \in T(\Sigma, X)$, we have $f(t_1,\dots,t_n) \in  T(\Sigma, X)$.
\end{itemize} 

We define the set of \emph{ground terms} of $\Sigma$ to be the set $T(\Sigma, \emptyset)$ (or short $T(\Sigma)$).
%
%The set of \emph{positions} of a term $t$, denoted $Pos(t)$, is a set of strings over the alphabet of positive integers. It is inductively defined as follows:
%\begin{itemize}
%\item If $t \in X \cup \Sigma^{(0)}$, then $Pos(t) = \{ \epsilon \}$
%\item If $t = f(t_1, \dots, t_n)$, then $Pos(t) = \{\epsilon\}  \cup \bigcup_{i=1}^{n} \{ip \mid p \in Pos(t_i) \}$ 
%\end{itemize}
%
%\ben{we should look at this section after the paper is mostly done. I suspect that many of these definitions will not be needed}
%Here, $\epsilon$ represents the empty string and is called the \emph{root position} of $t$.
%For positions $p$ and $q$, we say $p \le q$ if there exists a position $p'$ such that $pp' = q$.  
%We say $p$ is parallel to $q$, denoted $p \| q$, if $p \not\le q$ and $q \not\le p$. 
%The size of a term $t$, denoted $|t|$, is $|Pos(t)|$. 
%For $p \in Pos(t)$, the subterm of $t$ at position $p$, denoted $t|_p$ is defined by:
%\begin{itemize}
%\item $t|_\epsilon = t$ 
%\item $t|_{ip'} = t_i|_{p'}$, if $t = f(t_1, \dots, t_n)$
%\end{itemize}
%
%For $p \in Pos(t)$, the term $t[s]_p$ is created by replacing the subterm at position $p$ with $s$. In other words,
%\begin{itemize}
%\item $t[s]_\epsilon = s$
%\item $f(t_1, \dots, t_n)[s]_{ip'} = f(t_1, \dots, t_i[s]_{p'}, \dots, t_n)$
%\end{itemize}
%
%We define $\Subterms(t) = \{ t|_p \mid p \in Pos(t) \}$ and we lift the definition to sets $S$ of terms, $\Subterms(S) = \bigcup_{s \in S}\Subterms(s)$. 
%We say that a set $S$ of terms is \emph{subterm-closed} if $\Subterms(S) = S$. 
%
We define the subterms of a term recursively as $\Subterms(g(s_1,\dots,s_k)) = \{g(s_1,\dots,s_k)\} \cup \bigcup_i \Subterms(s_i)$, which we lift to sets $S$ of terms, $\Subterms(S) = \bigcup_{s \in S}\Subterms(s)$. 
We say that a set $S$ of terms is \emph{subterm-closed} if $\Subterms(S) = S$. 

For a set $y_1,\dots,y_k$ of variables (or constants) and terms $t_1,\dots,t_k,s$, the term $s\{t_1 / y_1, \dots, t_k / y_k\}$ is formed by replacing each instance of each $y_i$ in $s$ with $t_i$.
We call $\sigma : = \{t_1 / y_1, \dots, t_k / y_k\}$ a \emph{substitution}.
Substitutions extend in the natural way to formulae, by applying the substitution to each term in the formula.

We extend substitution to function symbols with arity $>0$, where it is also called \emph{second-order} substitution. 
For a function symbol $f$ of arity $k$, a signature $\Sigma$, and a fresh set of variables $\{x_1,\dots,x_k\}$, a \emph{substitution} to $f$ in $\Sigma$ is a term $w \in T(\Sigma, \{x_1,\dots,x_k\})$. 
Given a term $s \in T(\Sigma \cup {f})$, the term $s\{w/f\}$ is formed by replacing each occurrence of any term $f(s_1,\dots,s_k)$ in $s$ with $w\{s_1 / x_1, \dots, s_k / x_k \}$ (sometimes written $w(s_1, \dots, s_k)$).
We say that $x_1,\dots,x_k$ are the \emph{bound variables} of $f$. %^, and will always use these variables in any substitution to a function symbol.
Intuitively, second-order substitution replaces not only $f$ by $w$, but also replaces the arguments $s_1,\dots,s_k$ of each function application $f(s_1,\dots,s_k)$ by the bound variables. 

A \emph{context} $B$ is a term in $T(\Sigma, \{x\})$ with a single occurrence of $x$. 
For $s \in T(\Sigma)$, we write $B[s]$ for $B\{s/x\}$. 
%\ben{should we bother defining $\rightarrow_E$? It seems like everything can be done using just equational logic.}
%Given a set of equations $E \subseteq T(\Sigma) \times T(\Sigma)$ and terms $s,t \in T(\Sigma)$, we say that $s \rightarrow_E t$ if there exists an $(l,r)$ in $E$ and a $p \in Pos(s)$ such that $s|_p = l$ and $s[r]_p = t$. 
%Let $\estar{E}$ be the symmetric and transitive closure of $\rightarrow_E$. 
%
%We say a relation $\equiv \subseteq T(\Sigma) \times T(\Sigma)$ is  closed under $\Sigma$-operations if $s_1 \equiv t_1 \dots s_n \equiv t_n$ implies that $g(s_1 \dots s_n) = g(t_1 \dots t_n)$ for all $n > 0$, all terms $s_1,t_1\dots s_n, t_n \in T(\Sigma)$ and all $g \in \Sigma^(n)$.
%For any set of equation $E \subset T(\Sigma) \times T(\Sigma)$, $\estar{E}$ is the smallest equivalence relation containing $E$ that is closed under $\Sigma$-operations \cite{baader1999term}. 

\smvertspace
\subsubsection*{Logical Theories} 
A first-order model $M$ in $\Sigma$, also called $\Sigma$-model, is a
pair consisting of a set $\textit{dom}(M)$ called its domain and a
mapping $(-)^M$.  
The mapping assigns to each function symbol $f \in \Sigma^F$ with
arity $n\geq 0$ a total function $f^M:\textit{dom}(M)^n\to
\textit{dom}(M)$, and to each relation $R \in \Sigma^R$ of arity $n$ a
set $R^M \subseteq \textit{dom}(M)^n$.

A \emph{formula} is a boolean combination of relations over terms. 
The mapping induced by a model $M$ defines a natural mapping of formulas 
$\varphi\in L(\Sigma)$ to truth values, written $M \models \varphi$ (we also say $M$ \emph{satisfies} $\varphi$).  
For some set $\Phi$ of first-order formulas, we say $M \models \Phi$
if $M \models \varphi$ for each $\varphi \in \Phi$. 
A \emph{theory} $\mathcal{T}\subseteq L (\Sigma^F\cup\Sigma^R)$ is a set of formulas. 
We say $M$ is a model of $\mathcal{T}$ if $M \models \mathcal{T}$, and
use $\text{Mod}(\mathcal{T})$ to denote the set of models of
$\mathcal{T}$.  
A first-order formula $\varphi$ is \emph{valid} in $\mathcal{T}$ if
for all $M \in \text{Mod}(\mathcal{T})$, $M \models \varphi$. 
%A \emph{sentence} is a formula without free variables. 
A theory is \emph{complete} if for all formulas $\varphi\in L(\Sigma)$ either $\varphi$ or $\neg\varphi$ is valid.

Given a set of ground equations $E \subseteq T(\Sigma) \times T(\Sigma)$ and terms $s,t \in T(\Sigma)$, we say that $s \rightarrow_E t$ if there exists an $(l,r)$ in $E$ and a context $C$ such that $C[l] = s$ and $C[r] = t$.
For example, if $E := \{ a = g(b) \}$, then $h(a) \rightarrow_E h(g(b))$. 
%$p \in Pos(s)$ such that $s|_p = l$ and $s[r]_p = t$. 
Let $=_E$ be the symmetric and transitive closure of $\rightarrow_E$. 
We will sometimes write $E \vdash s=t$ instead of $s=_E t$.
We will use $[s]_E$ to represent the set $\{ t \mid s=_E t\}$. 
\emph{Birkhoff's Theorem} states that for any ranked alphabet $\Sigma$, set $E \subseteq T(\Sigma) \times T(\Sigma)$ and $s,t \in T(\Sigma)$,  $E \vdash s=t$ if and only if for every model $M$ in $\Sigma$ such that $M \models \bigwedge_{e\in E} e$ it holds $M \models s=t$~\cite{baader1999term}.

%The equations provable from a set $E$ of equations can also be derived using \

In this work, we consider the common quantifier-free background
theories of SMT solving: propositional logic (SAT), bit-vectors (BV),
difference logic (DL), linear real arithmetic (LRA), linear integer
(Presburger) arithmetic (LIA), the theory of arrays (AR), and the
theory of uninterpreted functions with equality (EUF). For detailed
definitions of these theories, see~\cite{barrett-smtbookch09,BarFT-SMTLIB}.
%\markus{Should we give detailed definitions of their signatures?}

%For the theory of EUF, in particular, we assume the presence of an If-Then-Else ($\ITE$) operator, which takes inputs of the form $\ITE(formula, term, term)$.
%In particular, for terms $s$ and $t$, $\ITE(true,s,t) = s$ and $\ITE(false,s,t)=t$ \cite{bryant1999exploiting}. 

For the theory of EUF it is common to introduce the If-Then-Else operator ($\ITE$) as syntactic sugar~\cite{bryant1999exploiting,barrett-smtbookch09,BarFT-SMTLIB}. 
We follow this tradition and allow EUF formulas to contain terms of the form $\ITE(\varphi,t_1,t_2)$, where $\varphi$ is a formula, and $t_1$ and $t_2$ are terms. 
To desugar EUF formulas we introduce an additional constant $c_{\text{ite}}$ and add two constrains $\varphi \rightarrow (c_{\text{ite}} = t_1)$ and $\neg \varphi \rightarrow (c_{\text{ite}} = t_2)$ for each $\ITE$ term $\ITE(\varphi,t_1,t_2)$.
As we will see in Section~\ref{sec:euf} the presence syntactic sugar such as the $\ITE$ operator in the grammar of SyGuS problems may have a surprising effect on the decidability of the SyGuS problem. 
%\subsection{Grammars and Automata} %\ben{do we really need to discuss CFGs?}
%\label{sec:grammars}
%\noindent
%{\bf{Grammars and Automata.}} %\ben{do we really need to discuss CFGs?}

\smvertspace
\subsubsection*{Grammars and Automata} %\ben{do we really need to discuss CFGs?}
\label{sec:grammars}
A \emph{context-free grammar} (CFG) is a tuple $G = (N, T, S, R)$ 
consisting of a finite set $N$ of nonterminal symbols 
with a distinguished \emph{start symbol} $S\in N$, 
a finite set $T$ of terminal symbols, 
and a finite set $R$ of production rules, which are tuples of the form $(N,(N\cup T)^*)$. 
Production rules indicate the allowed replacements of non-terminals by
sequences over nonterminals and terminals.  
The \emph{language}, $L(G)$, generated by a context-free grammar is
the set of all sequences that contain only terminal symbols that can
be derived from the start symbol using the production rules.  

\emph{Tree grammars} are a more restrictive class of grammars.
They are defined relative to a \emph{ranked alphabet} $\Sigma$.
%This is mentioned above.
%We will often refer to a ranked alphabet by its set of symbols, $\Sigma$ and assume that the arity is defined. 
%We define the set $T(\Sigma)$ of \emph{$\Sigma$-trees} as
%the smallest set that includes the symbols $g\in\Sigma$ with
%arity 0 and further includes for all $g \in \Sigma$ with arity $k$
%and elements $t_1,\dots,t_k\in T_{\Sigma}$ the element
%$g(t_1,\dots,t_k)$. 
A \emph{regular tree grammar} $G = (N,S,\Sigma,R)$ consists of a set $N$
of non-terminals, a start symbol $S \in N$, a ranked alphabet
$\Sigma$, and a set $R$ of production rules.  
Production rules are of the form $A \rightarrow g(t_1,t_2,...,t_k)$,
where $A\in N$, $g$ is in $\Sigma$ and has arity $k$, and each
$t_i$ is in $N\cup T_{\Sigma}$.  
%We will simply 
For a given tree-grammar $G$ we write $L(G)$ for the set of trees produced by $G$.
The \emph{regular tree languages} are the languages produced by some regular tree grammar.
Any regular tree grammar can be converted to a CFG by simply treating the
right-hand side of any production as a string, rather than a tree. 
Thus, the {\em undecidability results for SyGuS given regular tree grammars extend to undecidability results for SyGuS given CFGs}. 
%We will use $L(G)$ also to denote the set of strings produced by %\ben{We don't seem to do this anywhere in the paper}
%the corresponding CFG. 

Let $\Sigma$ be a signature of a background theory $\mathcal{T}$.  
We define a tree grammar $G=(N,S,\Sigma,P)$ to be
\emph{$\mathcal{T}$-compatible} (or $\Sigma$-compatible) if $\Sigma \subseteq \Sigma^F
\cup \Sigma^R$ and the arities for all symbols in $\Sigma$ match
those in $\Sigma$. 

A (deterministic) bottom-up (or \emph{rational}) tree automaton $A$ is a tuple $(Q, \Sigma, \delta, Q_F)$. 
Here, $Q$ is a set of states, $Q_F \subseteq Q$, and $\Sigma$ is a ranked alphabet.
The function $\delta$ maps a symbol $g \in \Sigma^{(k)}$ and states $q_1,\dots,q_k$ to a new state $q'$, for all $k$.
If no such $q'$ exists, $\delta(g, q_1,\dots,q_k)$ is undefined.
We can inductively extend $\delta$ to terms, where for all $g \in \Sigma^{(k)}$ and all $s_1,\dots,s_k \in T(\Sigma)$, we set $\delta(g(s_1,\dots,s_k)) := \delta(g, \delta(s_1), \dots, \delta(s_k))$.
The language accepted by $A$ is the set $L(A) := \{ s \mid s\in T(\Sigma),  \delta(s)\in Q_F\}$.
There exist transformations between regular tree grammars and rational
tree automata \cite{tata2007}, and we will sometimes also define SyGuS problems in terms of rational tree automata rather than a regular tree grammars.
\smvertspace
\subsubsection*{Syntax-Guided Synthesis}
\label{sec:sygus}
%We follow the definition of SyGuS given by Alur et al.~\cite{alur-fmcad13} and focus on the case that the set of candidate expressions (programs) is generated by a given grammar.  
%Let $\mathcal{T}$ be a background theory, and let $\mathcal{G}$ be a class of grammars. 
%Given a function symbol $f$ with arity $r$, a formula $\varphi$ over the signature $\Sigma$ of $\mathcal{T}$ along with $f$, and a grammar $G\in\mathcal G$ of expressions in $\Sigma \cup \{x_1,\dots,x_{r}\}$, the SyGuS problem is to find an expression $e \in L(G)$ such that the formula $\varphi\{e/f\}$ is valid or to determine the absence of such an expression. 
%\markus{What are expressions? Also: $\varphi\{e/f\}$ is not in $\mathcal T$ because of the variables $x_1,\dots,x_r$. I think what we need is applying two substitutions $\varphi\{e/f(e_1,\dots,e_r)\}\{e_1/x_1,\dots,e_r/x_r\}$ for each application $f(e_1,\dots,e_r)$.}
%We represent the SyGuS problem as the tuple $(\varphi, \Sigma, G, f)$.
%
We follow the definition of SyGuS given by Alur et al.~\cite{alur-fmcad13}, but we focus on the case to find a replacement for a single designated function symbol $f$ with a candidate expression (the program), which is generated by a given grammar $G$. 
Let $\mathcal{T}$ be a background theory over signature $\Sigma$, and let $\mathcal{G}$ be a class of grammars. 
Given a function symbol $f$ with arity $k$, a formula $\varphi$ over
the signature $\Sigma\dot\cup \{f\}$, and a grammar $G\in\mathcal G$
of terms in $T(\Sigma,\{x_1,\dots,x_k\})$, the SyGuS problem is to
find a term $w \in L(G)$ such that the formula $\varphi\{w/f\}$ is
valid or to determine the absence of such a term.  
We represent the SyGuS problem as the tuple $(\varphi, {\mathcal T}, G, f)$. 

The variables $x_1,\dots,x_k$ that may occur in the generated term $w$ stand for the $k$ arguments of $f$. 
For each function application of $f$ the higher-order substitution $\varphi\{w/f\}$ then replaces $x_1,\dots,x_k$ by the arguments of the function application. 

%Accordingly, the candidate expressions are terms in $T(\Sigma,\{x_1,\dots,x_k\})$, where $x_1,\dots,x_k$ are variables standing for the $k$ arguments of $f$ (which will serve as the bound variables of $f$ in the higher-order substitution as defined above). 

Note that the original definition of SyGuS allows for universally quantified variables, while our definition above admits no variables. 
This is equivalent as universally quantified variables can be replaced with fresh constants without affecting validity. 
%For example $\forall x, y f(x) = g(y)$ is valid if and only if $f(c_x) = g(c_y)$ is valid for new constants $c_x$ and $c_y$.
%Therefore, we will sometimes assume without loss of generality that the constraint formula contains no variables. 

%That is, the SyGuS problem is parametric in the background theory $T$ and the class of grammars $\mathcal G$. 
%
%Even though the problem is given in terms of
%synthesizing a function $f$, usually the grammar generates expressions
%that are {\it applications} of $f$. \ben{what does this mean?} % on terms constructed from symbols 
%from the logical formula and the underlying logical theories. Thus, the 
%expressions $e$ replace applications of $f$ in $\varphi$.
\begin{example}
Consider the following example SyGuS problem in linear integer arithmetic.
Let the type of the function to synthesize $f$ be 
$\int\times\int\mapsto\int$ and let the specification be given by the logical formula 
\[\varphi_1:\ \forall x, y \; f(x,y)=f(y,x)\ \wedge\ f(x,y)\ge x\,.\]
We can restrict the set of expressions
$f(x,y)$ to be expressions 
generated by the grammar below:
\begin{eqnarray*}
\term & := & x\mid y\mid\intconst\mid \ITE(\cond,\term,\term)\\
\cond & := & \term\le\term\mid \cond\wedge\cond \mid \neg\cond\mid (\cond)
\end{eqnarray*}
It is easy to see that a function computing the maximum over $x$ and $y$, such as $\ITE(x \le y, y, x)$, is a solution to the SyGuS problem. 
There are, however, other solutions, such as $\ITE(7 \le y \,\wedge\, 7 \le x, \,\ITE(x \le y, y, x), 10)$. 
The function computing the sum of $x$ and $y$ would satisfy the specification, but cannot be constructed in the grammar. 
%\qed
\end{example}

%%%%%%%%%%%%%%%%%%%%%%%%%%%%%%%%%%%%%%%%%%%%%%%%%%%%%%%%%%%%%%%%%%%%%%%%%%%%%%%%%%%%%

%\section{EUF is Decidable}
%\label{sec:euf_dec}

%\input{EUF_is_decidable}

%%%%%%%%%%%%%%%%%%%%%%%%%%%%%%%%%%%%%%%%%%%%%%%%%%%%%%%%%%%%%%%%%%%%%%%%%%%%%%%%%%%%%

\smvertspace
\section{SyGuS-EUF is Undecidable}
\label{sec:euf}
%\label{sec:euf_dec}
\smvertspace

We use SyGuS-EUF to denote the class of SyGuS problems $(\varphi, \text{EUF}, G, f)$ where $G$ is a grammar generating expressions that are syntactically well-formed expressions in EUF for $f$. 
In this section, we prove that SyGuS-EUF is undecidable. 
The proof of undecidability is a reduction from the simultaneous rigid E-unification problem (SREU) \cite{degtyarev1996undecidability}.
We say that a set $E:= \{e_1,\dots,e_l\}$ of equations between terms in $T(\Sigma, V)$ together with an equation $e^*$ between terms in $T(\Sigma, V)$ forms a \emph{rigid expression}, denoted $E \rig e^*$.  
A \emph{solution} to $E \rig e^*$ is a substitution $\sigma$, such that $e^*\sigma$ and $e_i \sigma$ are ground for each $e_i \in E$ and $E  \sigma \vdash e^*\sigma$. 
Given a set $S$ of rigid equations, the SREU problem is to find a substitution $\sigma$ that is a solution to each rigid equation in $S$, and is known to be undecidable~\cite{degtyarev1996undecidability}. 

\noindent
{\bf Reducing SREU to SyGuS-EUF.} 
We start the reduction with constructing a boolean expression $\Phi_S$ for a given set of rigid equations $S$ over alphabet $\Sigma$ and variables $V := \{x_1,\dots,x_m\}$. 
Let each $r_i\in S$ be $e_{i,1},\dots,e_{i,l_i} \rig e^*_i$, where $e_{i,1},\dots,e_{i,l_i}$, and $e^*_i$ are equations between terms in $T(\Sigma, V)$. 
We associate with each rigid expression $r_i\in S$ a boolean expression $\psi_i := \big(\bigwedge_{j = {1,\dots,l_i}} {e}_{i,j}\sigma_f \wedge \bigwedge_{k \ne j} a_k \ne a_j\big) \rightarrow {e_i}^*\sigma_f$, where $\sigma_f$ is the substitution $\{ f(a_1) / x_1, \dots, f(a_m) / x_m\}$. 
The symbol $f$ is a unary function symbol to be synthesized and $a_1, \dots, a_m$ are fresh constants ($a_i\notin\Sigma$ for all $i$).
We set $\Phi_S := \bigwedge_i \psi_i$. 

Next we give the grammar $G_S$, which generates the terms that may replace $f$ in $\Phi_S$. 
We define $G_S$ to have the starting nonterminal $A_1$ and the following rules:
\[
\begin{array}{ll}
A_1 \rightarrow & ITE(x = a_1, S', A_{2}) \\
A_2 \rightarrow & ITE(x = a_2, S', A_{3}) \\
\dots \\ 
A_{m-1} \rightarrow & ITE(x = a_{m-1}, S', A_{m-1}) \\
A_{m} \rightarrow & ITE(x = a_m, S', \bot) \\
%\{ A_i \rightarrow ITE(x = a_i, S', A_{i+1}) \; | \; i \in \{ 1, \dots, m\} \} \\ 
\end{array}
\]
where $\bot$ is a fresh constant ($\bot\notin \Sigma$ and $\bot\neq a_i$ for all $i$). 
Additionally, for each $g\in\Sigma$ we add a rule $S' \rightarrow g(S',\dots,S')$, where the number of argument terms of $g$ matches its arity. 

\begin{lemma}
\label{lem:sreutoeuf}
The SREU problem $S$ has a solution if and only if the SyGuS-EUF problem $\rho_S := (\Phi_S, \text{EUF}, G_S, f)$ has a solution over the ranked alphabet $\Sigma$. 
\end{lemma}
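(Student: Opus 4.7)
The plan is to exploit the fact that $G_S$ admits a single top-level shape: every $w \in L(G_S)$ has the form
\[
w = \ITE(x=a_1,\,t_1,\,\ITE(x=a_2,\,t_2,\,\dots\,\ITE(x=a_m,\,t_m,\,\bot)\cdots))
\]
with $t_1,\dots,t_m \in T(\Sigma)$, since the non-terminal $S'$ generates exactly the ground $\Sigma$-terms. A solution $w$ to $\rho_S$ therefore determines a tuple $(t_1,\dots,t_m)$ of ground terms, and conversely any such tuple yields a term in $L(G_S)$. The strategy is to identify this tuple with the SREU substitution $\sigma := \{t_1/x_1,\dots,t_m/x_m\}$ and to translate, via Birkhoff's theorem, between validity of $\Phi_S\{w/f\}$ and derivability $E_i\sigma \vdash e_i^*\sigma$ for each rigid equation $r_i \in S$.

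For the forward direction, starting from an SREU solution $\sigma$, each $x_j\sigma$ is ground so I set $t_j := x_j\sigma \in T(\Sigma)$ and form $w$ as above. I then verify $\Phi_S\{w/f\}$ is valid: fix any EUF-model $M$ and any conjunct $\psi_i\{w/f\}$. If some $a_k$ and $a_j$ collapse in $M$, the distinctness conjunct $a_k \neq a_j$ in the antecedent fails and $\psi_i\{w/f\}$ is trivially true. Otherwise a straightforward induction on the nested $\ITE$ shows that $w(a_j)$ evaluates to $t_j^M$ for every $j$, so $\psi_i\{w/f\}$ becomes semantically equivalent to $(\bigwedge_j e_{i,j}\sigma) \rightarrow e_i^*\sigma$, which holds in $M$ by Birkhoff applied to the assumption $E_i\sigma \vdash e_i^*\sigma$.

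For the backward direction, given a solution $w \in L(G_S)$, I extract the ground terms $t_1,\dots,t_m$ from its unique nested form and set $\sigma := \{t_1/x_1,\dots,t_m/x_m\}$. To show $\sigma$ solves each $r_i$, by Birkhoff it suffices that every $\Sigma$-model $M$ of $E_i\sigma$ also satisfies $e_i^*\sigma$. Given such an $M$, I build an EUF-model $M'$ over the enlarged signature by (if needed) enlarging the domain with fresh elements, interpreting $a_1,\dots,a_m,\bot$ as pairwise distinct elements, and defining $f^{M'}$ arbitrarily. In $M'$ the $a_k$'s are distinct and $w(a_j)$ again reduces to $t_j^{M'}$, so validity of $\psi_i\{w/f\}$ forces $M' \models e_i^*\sigma$, which restricts to $M \models e_i^*\sigma$ since $e_i^*\sigma$ uses only $\Sigma$-symbols. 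The main subtlety I expect is precisely the role of the distinctness conjunct: it is essential so that $w$ behaves like the intended substitution in every model (without it, collapsing $a_j$ with a smaller-index $a_k$ would make $w(a_j)$ equal to $t_k$ rather than $t_j$), and it necessitates the explicit model-enlargement step in the backward direction when invoking Birkhoff, which should be spelled out carefully.
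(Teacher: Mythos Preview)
Your proposal is correct and follows essentially the same approach as the paper's own proof: both directions hinge on the bijection between terms $w\in L(G_S)$ and ground substitutions via the fixed nested-$\ITE$ shape, the distinctness conjunct to force $w(a_j)=t_j$ in any model, and Birkhoff's theorem to pass between derivability and validity. Your backward direction, in which you enlarge the model with fresh distinct interpretations for the $a_i$ (and $\bot$) before invoking validity of $\psi_i\{w/f\}$, is exactly the paper's construction of $\hat{M}$.
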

\begin{proof}
The main idea behind this proof is that each $f(a_i)$ in $\Phi_S$ represents the variable $x_i$ in $S$. 
Any replacement to $f$ found in $G_S$ corresponds to a substitution on all variables $x_i$ in $S$ that grounds the equations in the SREU problem.

$\rightarrow:$ Let $\sigma_u := \{ u_1 / x_1 \dots, u_m / x_m\}$ be a solution to $S$, where each $u_i$ is a ground term in $T(\Sigma)$. 
We consider the term $w(x) := ITE(x = a_1, u_1, ITE(x = a_2, u_2, \dots,$ $ITE(x=a_{m}, u_m, \bot)\dots)$, which is in the language of the grammar $G_S$. 
To show that $\Phi_s\{w / f\}$ is valid, it suffices to show that for each model $M$ of $\Sigma \cup \{a_1, \dots, a_m\} \cup V$ and for each $\psi_i$ we have $M \models \psi_i\{w / f\}$. 
%\ben{There's a subtlety here as to whether models can give interpretations of universally quantified variables. It won't affect the result, but I should address it somewhere}
If $M \not\models [\bigwedge_{j = {1,\dots,l_i}} {e}_{i,j}\sigma_f \wedge \bigwedge_{k \ne j} a_k \ne a_j]\{w/f\}$, then $M \models \psi_i\{w / f\}$ holds trivially. 
We handle the remaining case below, giving justifications to the right of each new equation.
\begin{enumerate}
 \item Assume $M \models [\bigwedge_{j = {1,\dots,l_i}} {e}_{i,j}\sigma_f \wedge \bigwedge_{k \ne j} a_k \ne a_j]\{w/f\}$ 
 \item $M \models  \bigwedge_{k \ne j} a_k \ne a_j$ \hfill (1)
 \item For each $j$: $M \models w(a_j) = u_j$ \hfill (2)
 \item For each $j$: $M \models ({e}_{i,j}\sigma_f)\{w/f\} \leftrightarrow e_{i,j}\sigma_u$ \hfill (3)
 \item $M \models \bigwedge_{j = {1,\dots,l_i}} ({e}_{i,j}\sigma_f)\{w/f\}$ \hfill (1)
 \item $M \models \bigwedge_{j = {1,\dots,l_i}} e_{i,j}\sigma_u$ \hfill (4, 5)
 \item $\{e_{i,j} \mid j=1,\dots,m \}\sigma_u \vdash e^*_i\sigma_u$ \hfill (def. SREU)
 \item $M \models e^*\sigma_u$ \hfill (6,7, Birkhoff's Thm.)
 \item $M \models (e^*\sigma_f) \{w/f\}$ \hfill (3,8)
% \item $M \models  \ahat{s}_{i,j}\{w/f\} = s_{i,j}\sigma_u \wedge \ahat{t}_{i,j}\{w/f\} = t_{i,j}\sigma_u$ \hfill (3) 
% \item $M \models \ahat{s}_{i,j}\{w/f\} = \ahat{t}_{i,j}\{w/f\}$ \hfill (8,9)
\end{enumerate}
Therefore, $M \models \Phi_S$ and we get that $w$ is a solution to the SyGuS problem $\rho_S$.

$\leftarrow:$ Let $w(x)$ and $\sigma_u$ be defined as before and assume that $w$ is a solution to the SyGuS problem $\rho_S$. 
Each $u_i$ in $w$ is ground, since the nonterminal $S'$ in $G_S$ can only produce ground terms.
Chose any $r_i \in S$. 
We will show for every model $M$ on $\Sigma \cup V$, that if $M \models  \bigwedge_{j = {1,\dots,l_i}} e_{i,j}\sigma_u$ then $M \models  e^*_i\sigma_u$.
By Birkhoff's theorem, this implies $e_{i,1}\sigma_u,\dots,e_{i,l_i}\sigma_u \vdash e^*_i\sigma_u$.
\begin{enumerate}
\item Assume $M \models  \bigwedge_{j = {1,\dots,l_i}} e_{i,j}\sigma_u$
\item Let $\ahat{M}$ be a model over $\Sigma \cup V \cup \{a_1,\dots,a_m\}$ such that $\ahat{M} \restr \Sigma \cup V = M$ and $\ahat{M}$ assigns each $a_i$ to a distinct new element not in $dom(M)$.
\item $\ahat{M} \models w(a_j) = u_j$ \hfill (2)
\item For each $j$: $\ahat{M} \models ({e}_{i,j}\sigma_f)\{w/f\} \leftrightarrow e_{i,j}\sigma_u$  \hfill (3)
\item $\ahat{M} \models  \bigwedge_{j = {1,\dots,l_i}} e_{i,j}\sigma_u$ \hfill (1,2)
\item $\ahat{M} \models \bigwedge_{j = {1,\dots,l_i}} ({e}_{i,j}\sigma_f)\{w/f\} $ \hfill (4,5)
\item $\ahat{M} \models \psi_i\{w/f\}$ \hfill ($w$ is a SyGuS solution)
\item $\ahat{M} \models (e^*_i\sigma_f)\{w/f\}$ \hfill (6, 7)
\item $\ahat{M} \models e^*_i\sigma_u$ \hfill (3,8)
%\item $\ahat{M} \models (\ahat{s_i}\{w/f\} = s_i\sigma_u) \wedge (\ahat{t_i}\{w/f\} = t_i\sigma_u)$ \hfill (3)
%\item $\ahat{M} \models  s_i\sigma_u = t_i\sigma_u$ \hfill (8,9)
\item $M \models e^*_i\sigma_u$ \hfill (2,9)
\end{enumerate}
Thus $e_{i,1}\sigma_u,\dots,e_{i,l_i}\sigma_u \vdash e^*_i\sigma_u$ and $\sigma_u$ is a solution to $S$.
\qed
\end{proof}

%The main idea behind the proof of Lemma~\ref{lem:sreutoeuf} is that each $f(a_i)$ in $\Phi_S$ represents the variable $x_i$ in $S$. 
%Any replacement to $f$ found in $G_S$ corresponds to a substitution on all variables $x_i$ in $S$ that grounds the equations in the SREU problem.
%This immediately yields the following theorem. 
%The full proof can be found in the appendix. 

\begin{theorem}
The SyGuS-EUF problem is undecidable.
\end{theorem}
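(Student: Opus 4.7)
The plan is to obtain the theorem as an immediate corollary of Lemma~\ref{lem:sreutoeuf} combined with the classical undecidability of simultaneous rigid E-unification~\cite{degtyarev1996undecidability}. First I would argue that the map $S \mapsto \rho_S = (\Phi_S,\text{EUF},G_S,f)$ constructed just above the lemma is effectively computable from an SREU instance $S$ over $\Sigma$ and $V = \{x_1,\dots,x_m\}$: $\Phi_S$ is a finite conjunction of one implication $\psi_i$ per rigid expression $r_i \in S$, obtained by the purely syntactic substitution $\sigma_f = \{f(a_1)/x_1,\dots,f(a_m)/x_m\}$ and by adding $\binom{m}{2}$ disequalities $a_k \neq a_j$; and $G_S$ consists of the $m$ nested $\ITE$ rules plus one rule $S' \to g(S',\dots,S')$ for each $g \in \Sigma$. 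All of this is computable in polynomial time in the size of $S$ and $\Sigma$.

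Next I would invoke Lemma~\ref{lem:sreutoeuf} to conclude that $S$ has a solution if and only if the SyGuS-EUF instance $\rho_S$ has one. Composing the reduction with any hypothetical decision procedure for SyGuS-EUF would therefore yield a decision procedure for SREU, contradicting its undecidability~\cite{degtyarev1996undecidability}.

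The one sanity check that needs mentioning, and which I expect to be the only subtlety, is that $\rho_S$ genuinely falls inside the class SyGuS-EUF as defined at the beginning of Section~\ref{sec:euf}: the signature of $\rho_S$ uses only the EUF symbols from $\Sigma$, the fresh constants $a_1,\dots,a_m,\bot$, the unary symbol $f$ to be synthesized, and the $\ITE$ construct, which the preliminaries explicitly admit as syntactic sugar in EUF; moreover $G_S$ is a regular tree grammar that generates only syntactically well-formed EUF terms of the right type for $f$. With this observation the reduction is within the intended class, and the theorem follows.
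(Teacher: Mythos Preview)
Your proposal is correct and matches the paper's approach exactly: the paper states the theorem immediately after Lemma~\ref{lem:sreutoeuf} with no separate proof, treating it as the obvious consequence of the reduction from SREU together with the undecidability of SREU~\cite{degtyarev1996undecidability}. Your write-up simply makes explicit the effectiveness of the construction and the membership of $\rho_S$ in the SyGuS-EUF class, both of which the paper leaves implicit.
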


%\subsection{EUF without ITE}
\noindent
{\bf{Remark on EUF without ITE.}}
A key step in the proof of Lemma~\ref{lem:sreutoeuf} is the use of $\ITE$ statements to allow a single expression $w$ to encode instantiations of multiple different variables.
As discussed in Section~\ref{sec:prelim}, $\ITE$ statements are commonly part of EUF, but some definitions of EUF do not allow for $\ITE$ statements~\cite{Kroening2008}. 
While this syntactic sugar has no effect on the complexity of the validity of EUF formulas, the undecidability of SyGuS-EUF may depend on the availability of $\ITE$ operators.
It remains open whether there exist alternative proofs of undecidability that do not rely on $\ITE$ statements. 

We use SyGuS-Arrays to denote the class of SyGuS problems $(\varphi, Arrays, G, f)$, where
 Arrays is the theory of arrays \cite{barrett-smtbookch09}, and $G$ is a grammar such that $L(G)$ are syntactically well-formed expressions in Arrays for $f$. There is a standard construction for representing uninterpreted functions as read-only arrays~\cite{barrett-smtbookch09}.
Therefore, the undecidability of SyGuS-Arrays follows from the undecidability of SyGuS-EUF, as we state below. 

\begin{corollary}
The SyGuS-Arrays problem is undecidable.
\end{corollary}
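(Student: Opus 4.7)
The plan is a direct reduction from SyGuS-EUF to SyGuS-Arrays using the standard encoding of uninterpreted functions as read-only arrays, so that the theorem above immediately yields the corollary. Given any instance $(\varphi, \text{EUF}, G, f)$, I would translate each uninterpreted $n$-ary function symbol $g \in \Sigma$ to a fresh array constant $A_g$ (of sort indexed by a curried or tupled product of the argument sorts), rewriting every subterm $g(t_1,\dots,t_n)$ inside $\varphi$ into the corresponding nested array read. The synthesis target $f$ is handled the same way: the new target becomes an array $A_f$, and the grammar $G$ is translated into a grammar $G'$ whose productions emit array-read expressions in place of EUF function applications. Note that in the reduction used to prove the theorem (Lemma~\ref{lem:sreutoeuf}), the synthesized $f$ is unary, so only unary arrays are needed and the curry/tuple issue does not arise.

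The correctness argument proceeds model-theoretically in both directions. From an EUF solution $w \in L(G)$, the syntactically translated $w' \in L(G')$ serves as an arrays solution: any arrays model $M'$ of the translated specification induces an EUF model $M$ by interpreting each $g^M(\vec{a}) := \text{read}(A_g^{M'}, \vec{a})$, and conversely an EUF model extends to an arrays model by choosing each array so that its reads agree with the corresponding function. Since the translation introduces no write operations, the array-specific axioms (extensionality, read-over-write) are never exercised, and the arrays behave exactly like total uninterpreted functions. Thus solutions correspond bijectively under the translation.

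Composing this reduction with Lemma~\ref{lem:sreutoeuf} shows that SREU reduces to SyGuS-Arrays, giving undecidability. The only technical point worth verifying is that $G'$ still generates only syntactically well-typed expressions of the intended sort, and that $\ITE$ terms inherited from $G$ remain well-formed in the array theory (which they do, since $\ITE$ is available in the standard SMT-LIB signature for arrays). No genuine mathematical obstacle arises beyond what has already been overcome in establishing the theorem, which is exactly why the statement is presented as a corollary rather than a theorem.
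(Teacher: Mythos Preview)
Your proposal is correct and follows exactly the approach the paper takes: the paper's own justification is a one-line appeal to the standard encoding of uninterpreted functions as read-only arrays, and you have simply spelled out that reduction in more detail. One small imprecision worth cleaning up is the phrase ``the new target becomes an array $A_f$'': it is cleaner to leave $f$ as the unary function symbol to be synthesized and translate only the background symbols in $\Sigma$ to arrays, so that the grammar still produces element-sorted terms and the second-order substitution $\{w/f\}$ works unchanged.
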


%\begin{theorem}
%The SyGuS problem to synthesize two functions is undecidable on EUF without $\ITE$ statements.
%\end{theorem}
%\ben{Markus: do you want to talk about your insight from trying to remove ITE statements?}

%\begin{itemize}
%	\item EUF is syntactically not the same as equational logic (EL)
%	\item The result does not carry over. 
%	\item EL is undecidable if we consider an extension of SyGuS to synthesize multiple function symbols at once. 
%	\item SyGuS-EL with one function symbol is open.
%	\item This is an example of the (unsurprising) sensitivity of the SyGuS problem in syntactical details. 
%\end{itemize}

%%%%%%%%%%%%%%%%%%%%%%%%%%%%%%%%%%%%%%%%%%%%%%%%%%%%%%%%%%%%%%%%%%%%%%%%%%%%%%%%%%%%%

\smvertspace

\section{Regular SyGuS-EUF}
\label{sec:reg-euf}
\smvertspace

This section describes a fragment of \emph{EUF}, which we call \emph{regular-EUF}, for which the SyGuS problem is decidable.
%The \emph{regular-EUF} 

 %\ben{Is this a good name? I call it this because it precisely describes the SyGuS-EUF problems whose set of solutions are regular languages.}
%This subproblem does not allow for the presence of If-Then-Else (ITE) statements, and requires that at most one $f$ appear in any equation.

\begin{definition}
We call $(\phi,\text{EUF},G,f)$ a \emph{regular SyGuS-EUF problem} if $G$ contains no $\ITE$ expressions 
and $\phi$ is a \emph{regular-EUF formula} as defined below. 
%We call $(\phi,\text{EUF},G,f)$ a \emph{regular SyGuS-EUF problem} if $G$ contains no $\ITE$ statements and $\phi$ is a \emph{regular-EUF formula} as defined below. 

A \emph{regular-EUF} formula is a formula $\phi := \bigwedge_i \psi_i$ over some ranked alphabet $\Sigma$, where each $\psi_i$ satisfies the following conditions:
\begin{enumerate}
\item It is a disjunction of equations or the negation of equations.
\item It does not contain any ITE expressions.
\item It contains at most one occurrence of $f$ per equation.
\item It satisfies one of the following cases:
	\begin{itemize}
	\item Case 1: The symbol $f$ only occurs in positive equations.
	\item Case 2: The symbol $f$ occurs in exactly one negative equation, and nowhere else.
	\end{itemize}
\end{enumerate}

We define any disjunction $\psi$ that satisfies the above conditions as \emph{regular}.
We will refer to a regular $\psi$ as case-1 or case-2, depending on which of the above cases is satisfied.
Note that every regular-EUF formula is in conjunctive normal form.

\end{definition}

We will show that for every regular $\psi_i$, we can construct a regular tree automaton $A_{\psi_i}$ accepting precisely the solutions to the SyGuS-EUF problem on $\psi_i$. 
The set of solutions to $\phi$ then becomes $L(G) \cap \bigcap_i L(A_{\psi_i})$, where $G$ is the grammar of possible replacements.
The grammar $G$ can be represented as a deterministic bottom-up tree automaton $A_G$ whose size is exponential in $|G|$ \cite{tata2007}.
The product-automaton construction can be used to determine if $L(G) \cap \bigcap_i L(A_{\psi_i})$ is non-empty, which would imply that a solution exists to the corresponding SyGuS problem. 
This construction takes $O(|A_G|\cdot \prod_i |A_{\psi_i}|)$ time and space.
Note that this is at most exponential even when some of the automata have size exponential in $|\phi|$ or $|G|$.

The connection between sets of ground equations and regular tree languages was first observed by Kozen \cite{kozen1977complexity}, who showed that a language $L$ is regular if and only if there exist a set $E$ of ground equations and collection $S$ of ground terms such that $L = \bigcup_{s \in S} [s]_E$. 
The following, very similar theorem shows that a certain set of equivalence classes of a ground equational theory can be represented by a regular tree automaton.

\begin{theorem} 
\label{aut_thm}
Let $E$ be a set of ground equations over the alphabet $\Sigma$, and let $C$ be a subterm-closed set of terms such that every term in $E$ is in $C$.
There exists a regular tree automaton without accepting states $A_{E,C} := (Q, \Sigma, \delta)$ such that a state in $Q$ represents an equivalence class of a term in $C$.
More formally,  this means that for all terms $s,t \in T(\Sigma)$ such that there exist terms $s',t' \in C$ so that $s =_E s'$ and $t =_E t'$, it holds that $s =_E t$ if and only if $\delta(s) = \delta(t)$. 
\end{theorem}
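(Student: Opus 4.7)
The plan is to build $A_{E,C}$ by taking states to be $=_E$-equivalence classes of $C$ and defining transitions via the natural congruence rule; completeness is then verified on all ground terms having an $E$-equivalent in $C$. Concretely, let $\sim$ denote the restriction of $=_E$ to $C$, set $Q := C/{\sim}$, and for each $g\in\Sigma^{(k)}$ put $\delta(g,q_1,\dots,q_k) := [g(c_1,\dots,c_k)]_\sim$ whenever some representatives $c_i\in q_i$ exist with $g(c_1,\dots,c_k)\in C$, and leave $\delta$ undefined otherwise. Well-definedness is immediate from congruence: any two witnessing tuples $(c_i)$ and $(c'_i)$ give $g(c_1,\dots,c_k) =_E g(c'_1,\dots,c'_k)$, so they share a $\sim$-class.

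The heart of the proof is the invariant, shown by structural induction on $s$: if $s=_E s'$ for some $s'\in C$, then $\delta(s)$ is defined and equals $[s']_\sim$. The case $s\in C$ is handled by a short bottom-up induction, using subterm-closedness of $C$ so that $\delta$ can be evaluated along the syntax of $s$. The substantive case is $s=g(s_1,\dots,s_k)\notin C$, which I reduce to two sub-lemmas: (i) each $s_i$ is itself $=_E$-equivalent to some term of $C$, so the inductive hypothesis applies to the children; and (ii) there exist $c'_i\in C$ with $c'_i =_E s_i$ and $g(c'_1,\dots,c'_k)\in C$, so that the transition $\delta(g,[c'_1]_\sim,\dots,[c'_k]_\sim)$ is defined.

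Both sub-lemmas come from a position-tracking argument along an arbitrary rewrite chain $s=u_0,u_1,\dots,u_n=s'$ witnessing $s=_E s'$, exploiting the groundness of $E$: each step replaces a subterm of $u_i$ at some position by the other side of some equation $(l,r)\in E$, and both $l$ and $r$ lie in $C$. For (i), fix a position $p$ in $s$: every step whose rewrite position is disjoint from or strictly below $p$ preserves $=_E$-equivalence of the subterm at $p$, while the first step (if any) whose rewrite occurs at or above $p$ pins the subterm at $p$ to a subterm of some $l$ or $r$, which lies in $C$ by subterm-closedness; if no such step exists, the position $p$ survives to $s'$, giving $s|_p =_E s'|_p\in C$. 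For (ii), split on whether the chain contains any root rewrite: if not, the root symbol $g$ is preserved throughout, so $s' = g(s'_1,\dots,s'_k)\in C$ with $s_i=_E s'_i$, and we take $c'_i := s'_i$; if yes, let $i^*$ be the first root rewrite, so $u_{i^*}$ still has root $g$ (preserved by the below-root steps preceding it) and is the source of a root rewrite, forcing $u_{i^*}\in C$, with its children supplying the desired $c'_i$.

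I expect sub-lemma (ii) to be the main obstacle: this is the step that genuinely requires $E$ to be ground and $C$ to contain all subterms of $E$-equations, because the argument hinges on the existence of a ``first root rewrite'' whose source lies in $C$. Once the invariant is established, the theorem follows immediately: for $s,t$ with $E$-equivalents $s',t'\in C$ we get $\delta(s)=\delta(t)$ iff $[s']_\sim=[t']_\sim$ iff $s'=_E t'$ iff $s=_E t$, and regularity of $A_{E,C}$ is evident from $|Q|\le|C|$.
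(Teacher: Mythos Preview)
Your proof is correct, and it takes a genuinely different route from the paper's. The paper builds $A_{E,C}$ operationally: it starts with one state $q_s$ per term $s\in C$ and transitions $\delta(g,q_{s_1},\dots,q_{s_k})=q_{g(s_1,\dots,s_k)}$ for each $g(s_1,\dots,s_k)\in C$, and then iteratively calls a $\mathit{merge}$ procedure on the pairs $(q_s,q_t)$ for each equation $s=t$ in $E$, propagating merges whenever two transitions collide. This is exactly the congruence-closure algorithm cast as an automaton construction, and the paper closes with ``a simple inductive argument will show that the resulting automaton is $A_{E,C}$,'' leaving the correctness of the invariant unwritten. You instead define the automaton declaratively as the quotient $C/{\sim}$ from the outset, and then invest the real effort in the invariant that $\delta(s)=[s']_\sim$ for any $s$ with an $E$-equivalent $s'\in C$, proving it via the position-tracking argument along a rewrite chain. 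The paper's construction is closer to how one would implement this and makes finiteness and the role of $C$ immediate; your approach makes the semantic correctness explicit, in particular the nontrivial fact (your sub-lemma~(ii)) that any term $E$-equivalent to something in $C$ can have its top-level transition witnessed inside $C$, which is precisely the content the paper's ``simple inductive argument'' would have to supply.
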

\begin{proof}

Let $Q := \{ q_s \mid s \in C\}$.
For each term $g(s_1,\dots,s_k) \in C$, for $g \in \Sigma_k$, let $\delta(g, q_{s_1}, \dots, q_{s_k}) = q_{g(s_1,\dots,s_k)}$.

We define the function $\mathit{merge}(q,q')$ to operate on $A_{E,C}$ as follows:
First, remove $q'$ from $q$. For all $q_1, \dots, q_{i-1}, q_{i+1},\dots,q_k, q''$ and $g$ such that $\delta(g, q_1, \dots, q_{i-1}, q', q_{i+1},$ $ \dots, q_k)=q''$, add $\delta(g, q_1, \dots, q_{i-1}, q, q_{i+1}, \dots, q_k)=q''$ to $\delta$. 
If there already exists some $q'''$ such that $\delta(g, q_1, \dots, q_{i-1}, q', q_{i+1}, \dots, q_k) = q'''$, then $\mathit{merge}(q'', q''')$. 

Now for each $s = t$ in $E$, call $\mathit{merge}(q_s, q_t)$.
A simple inductive argument will show that the resulting automaton is $A_{E,C}$.
\qed
%\ben{I'm not sure how much detail to go into here to show the property holds. This is the sort of thing that can be handled by a cumbersome, but easy inductive argument.}
\end{proof}

Let $\psi := e_1 \vee e_2 \vee \dots e_{k-1} \vee \neg e_k \vee \dots \vee \neg e_{k+r}$ be a regular formula. 
Let $P := \{ e_1, \dots, e_{k-1}\} $ and $N := \{ e_k, \dots, e_{k+r} \}$.
We can rewrite $\psi$ to the normal form $\psi := (\bigwedge_{e \in N} e) \rightarrow (\bigvee_{e \in P} e)$.
Solving the SyGuS problem for $\psi_i$ then becomes a problem of finding a $w$ such that $N\{w/f\} \vdash e\{w/f\}$ for some $e\in P$.
The technique to form the automaton $A_{\psi_i}$ that represents the solutions to $\psi_i$ depends on whether $\psi_i$ is case-1 or case-2.

Assume that $\psi$ is case-1 and chose some $s = t \in P$. 
Assume $f$ is not in $s=t$. 
If $N \vdash s = t$, then $\psi_i$ is trivially solvable.
If $N \not\vdash s=t$, then $s=t$ can be removed from $\psi_i$ to yield an equally solvable formula.  
Now assume $f$ is in $s=t$.
Without loss of generality, there is a context $B$ and a set of terms $s_1, \dots, s_{arity(f)}$ such that $s = B[f(s_1, \dots, s_{arity(f)})]$. 
Let $C := \Subterms(N) \cup \Subterms(\{s_1, \dots,  s_{arity(f)}\})$ and let $A_{P,C} := (Q, \Sigma, \delta)$ be the automaton defined in the proof of theorem $\ref{aut_thm}$. 
For each $q \in Q$, there is a ground term $u_q$ such that $\delta(u_q) = q$. 
Let $Q' \subseteq Q$ be the set of states $q$ such that $\delta(B[u_q]) = \delta(t)$. 
By theorem \ref{aut_thm}, $P \vdash B[u_q] = t$ if and only if $q \in Q'$.
Therefore, for any replacement,  $w$, of $f$, $P \vdash (s=t)\{w/f\}$ if and only if $\delta(w(s_1,\dots,s_{arity(f)})) \in Q'$. %\ben{maybe go into more detail, here}

Let $A_{s=t} := (Q, \Sigma \cup \{x_1, \dots, x_{arity(f)}\}, \delta', Q')$ be a tree automaton with accepting states $Q'$. 
For each $x_i$, let $\delta'(x_i) := \delta(s_i)$. 
For all $u \in T(\Sigma)$, let $\delta'(u) := \delta(u)$. 
A simple inductive argument will show that for any replacement $w$ of $f$,  $\delta(w(s_1,\dots,s_{arity(f)})) = \delta'(w(x_1,\dots,x_{arity(f)}))$.
Thus, $L(A_{s=t})$ defines the precise set of terms $w$ such that $P \vdash (s=t)\{w/f\}$.

The set of solutions to $\psi$ can be given by the automaton $A_{\psi_i}$ whose language is $\bigcup_{s=t \in P} L(A_{s=t})$.
This can be found in time and space exponential in $|N|$ using the product construction for tree automata \cite{tata2007}. %\ben{can probably do better than exponential space, but I can't find a result on this}

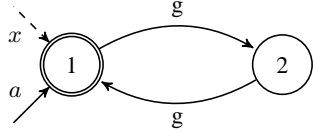
\begin{figure}[!ht]
 \label{ex1fig}
  \centering
  \begin{tikzpicture}[->,>=stealth',shorten >=1pt,auto,node distance=2.8cm,
                    semithick]
  \tikzstyle{every state}=[]

  \node[state,accepting] (A)            {1};
  \node[state]         (B) [right of=A] {2};
  
%  \node[state]         (D) [below right of=A] {$q_d$};
%  \node[state]         (C) [below right of=B] {$q_c$};
%  \node[state]         (E) [below of=D]       {$q_e$};

  \path (A) edge [bend left]  node {g} (B)
  		(B) edge [bend left]  node {g} (A);
  \draw[<-,dashed] (A) -- node[below left] {$x$} ++(-.8cm,.8cm);
  \draw[<-] (A) -- node[above left] {$a$} ++(-.8cm,-.8cm);
\end{tikzpicture}
  \caption{The automaton $A_1$ accepting the solutions to $\psi_1$ in example \ref{regex1}.}
\end{figure}

\begin{example}
\label{regex1}
 Let $\psi := (g(a)=b \wedge g(b)=a) \rightarrow f(a)=g(g(b))$. 
Note that this is a case-1 regular EUF clause. 
If we set $E := \{g(a)=b, g(b)=a\}$ and $C := \{a, b, g(a), g(b), g(g(b))\}$, then $A := A_{E,C}$ is the automaton from figure \ref{ex1fig} (excluding the accepting state and $x$ transition).
Since the argument of $f$ in $f(a)=g(g(b))$ is $a$ and $A$ parses $a$ to state-1, a transition from $x$ to state-1 is added to $A$.
Since $g(g(b))$ parses to state-2 in $A$, state-2 is set as an accepting state in $A$.
So $A$ accepts the replacements $w$ to $f$ such that $\psi\{w/f\}$ is valid.
\end{example}

Assume $\psi$ is case-2 and let $s=t$ be the equation in $N$ that contains $f$. 
Without loss of generality, there is a context $B$ and a set of terms $s_1, \dots, s_{arity(f)}$ such that $s = B[f(s_1, \dots, s_{arity(f)})]$.
Let $N' := N \backslash \{s=t\}$, and let $C := \Subterms(N' \cup P) \cup \Subterms(\{t,s_1,\dots,s_{arity(f)}\})$. 
Choose some $u = u' \in P$. 
If $N' \vdash u=u'$, then every replacement to $f$ is a solution.
So assume $N' \not\vdash u=u'$
Let $w$ is a replacement to $f$ such that $s' := s\{w/f\} = B[w(s_1,\dots,s_{arity(f)})]$ and $N'\{w/f\} \vdash u=u'$.
Let $s' := s\{w/f\}$. 
Assume $s'$ is not $N'$-equivalent to any term in $C$, let $C' := C\cup \Subterms(s')$ and let $A_{N',C'}=(Q, \Sigma, \delta)$.
We know $\delta(s')$ has no outgoing edges: if it did, $s'$ would be $N'$-equivalent to some term in $C$.
By construction,  $A_{N'\cup\{s'=t\},C'}$ is equivalent to calling $merge(\delta(s'), \delta(t))$ on $A_{N',C'}$.
Since $\delta(s')$ has no outgoing edges, calling $merge(\delta(s'), \delta(t))$ on $A_{N',C'}$ cannot induce any more merges.
Therefore, since $\delta(u)$ is not equal to $\delta(u')$, they are not equal after the merge.
So, $s'$ and thus $w(s_1,\dots,s_{arity(f)})$ are $N'$-equivalent to some terms in $C$.

Let $A_{N',C} := (Q, \Sigma, \delta)$. 
For each $q \in Q$, there is a ground term $u_q$ such that $\delta(u_q) = q$. 
Let $Q_{u=u'}$ be the set of states such that $N' \cup \{ B[u_{q'}] = t\} \vdash u = u'$ for each $q' \in Q_{u=u'}$.
Then for each replacement $w$,  $N\{w/f\} \vdash u=u'$ if and only if $\delta(w) \in Q_{u=u'})$.

Let $Q' := \bigcup_{e \in P} Q_e$.
Let $A_{\psi} := (Q, \Sigma \cup \{x_1, \dots, x_{arity(f)}\}, \delta', Q')$ be a tree automaton with accepting states $Q'$.  %\ben{same as above}
For each $x_i$, let $\delta'(x_i) := \delta(s_i)$. 
For all $u \in T(\Sigma)$, let $\delta'(u) := \delta(u)$.
A simple inductive argument will show that $L(A_{\psi})$ are precisely the solutions to $\psi$.

\begin{figure}
\label{ex2fig}
\centering
\begin{minipage}{.5\textwidth}
  \centering
    \begin{tikzpicture}[->,>=stealth',shorten >=1pt,auto,node distance=2.0cm,
                    semithick]
  \tikzstyle{every state}=[]

  \node[state]           (A)                    {1};
  \node[state,accepting] (C) [above right of=A] {3};
  \node[state]           (B) [below right of=C] {2};

  \path (A) edge [bend right]  node[below] {g} (B)
  		(B) edge [bend right]  node[above right] {h} (C)
  		(C) edge [bend right]  node[above left] {g} (A);
  
  \draw[<-,dashed] (B) -- node[above right] {$x$} ++(.8cm,-.8cm);
  \draw[<-] (A) -- node[above left] {$a$} ++(-.8cm,-.8cm);
\end{tikzpicture}
  \label{fig:test1}
\end{minipage}%
\begin{minipage}{.5\textwidth}
  \centering
  \begin{tikzpicture}[->,>=stealth',shorten >=1pt,auto,node distance=2.0cm,
                    semithick]
  \tikzstyle{every state}=[]
  \node[state]           (A)                    {1,2,3};
  \path (A) edge [loop above] node {g,h} (A);
  \draw[<-] (A) -- node[above left] {$a$} ++(-.8cm,-.8cm);
  \end{tikzpicture}
  \label{fig:test2}
\end{minipage}
\caption{Left: The set of solutions to $\psi$ in example \ref{regex2}.
Right: The resulting automaton (without x transition and accepting state) after merging states 1 and 3.}
\end{figure}
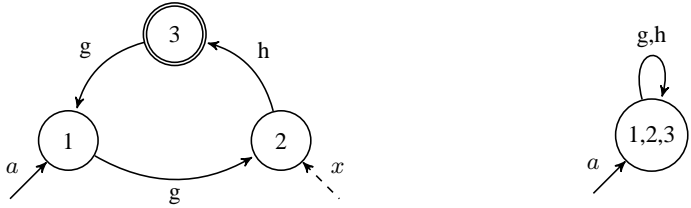 %\ben{need to add "fig-a" and "fig-b" labels to each side}

\begin{example}
\label{regex2}
Let $\psi := (g(h(g(a)))=a \wedge f(g(a))=a) \rightarrow h(g(a))=a$. 
Note that this is a case-2 regular-EUF clause.
If we set $E := \{g(h(g(a)))=a\}$ and $C := \{a,g(a),h(g(a)),g(h(g(a)))\}$, then $A := A_{E,C}$ is the automaton from the left side of figure-\ref{ex2fig} (excluding the accepting state and $x$ transition).
Since the argument of $f$ in $f(g(a))=a$ is $a$ and $A$ parses $a$ to state-2, a transition from $x$ to state-2 is added to $A$.
If we choose a replacement $w$ such that $w(g(a))$ parses to state-3 in $A$, then applying the equation $w(g(a))=a$ merges state-3 with state-1.
This, in turn, forces a merge between the new state and state-2, yielding the automaton on the right side of figure-\ref{ex2fig}.
This automaton parses $h(g(a))$ and $a$ to the same state, so state-3 is an accepting state. %\ben{I'm not sure how to word this better}
This does not occur if $w(g(a))$ parses to state-1 or state-2 in $A$, so they are not accepting states.
So $A$ accepts the replacements $w$ to $f$ such that $\psi\{w/f\}$ is valid.
%\ben{should I include these examples after each case-description, or somewhere else?}
\end{example}

%\ben{do I want to state the above construction inside the proof of a theorem?}
We can summarize the above construction in the following lemma.

\begin{lemma}
\label{in-exp}
The regular SyGuS-EUF problem is in EXPTIME. 
\end{lemma}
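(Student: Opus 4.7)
The plan is simply to tally the time and space used by the construction given in the preceding two cases and confirm it is at most single-exponential in $n := |\phi| + |G|$. Writing $\phi = \bigwedge_i \psi_i$ with $n_i := |\psi_i|$ and $\sum_i n_i \le |\phi|$, the overall algorithm is: (i) build $A_{\psi_i}$ for each clause, (ii) determinise $G$ into the bottom-up automaton $A_G$, (iii) form the product automaton for $L(A_G) \cap \bigcap_i L(A_{\psi_i})$, and (iv) test it for emptiness.

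First I would bound $|A_{\psi_i}|$. The core automaton $A_{E,C}$ of Theorem~\ref{aut_thm} has at most $|C| = O(n_i)$ states, and is built in polynomial time by iterating $\mathit{merge}$. For Case 1, each positive equation $s=t$ contributes an automaton $A_{s=t}$ obtained from $A_{E,C}$ by adding a single $x$-transition and marking an accepting subset; the union $\bigcup_{s=t\in P} L(A_{s=t})$ is then assembled into a single deterministic automaton whose size is at most $2^{O(n_i\log n_i)}$. For Case 2, the accepting set $Q'$ is computed by, for each candidate state $q$ and each positive equation $u=u'$, simulating the merge of $\delta(B[u_q])$ with $\delta(t)$ in $A_{N',C}$ and checking whether $\delta(u)$ and $\delta(u')$ collapse; this is polynomial per candidate via congruence closure, so $|A_{\psi_i}|$ is again at most $2^{O(n_i\log n_i)}$ after any needed determinisation.

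Next I would account for the remaining steps. As noted earlier in this section, $A_G$ has at most $2^{O(|G|)}$ states, and the product automaton recognising $L(G)\cap\bigcap_i L(A_{\psi_i})$ has at most $|A_G|\cdot\prod_i|A_{\psi_i}| \le 2^{O(|G|)}\cdot 2^{O(\sum_i n_i\log n_i)} = 2^{O(n\log n)}$ states. Emptiness of a bottom-up tree automaton is decidable in linear time in its size, so the full procedure runs in single-exponential time, giving EXPTIME membership.

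The main obstacle here is Case-2 bookkeeping: one must verify that the observation sketched in the construction above — every candidate replacement $w$ forces $w(s_1,\dots,s_{arity(f)})$ to be $N'$-equivalent to some term already in $C$ — actually holds, so that enumerating $|C|$ candidate states suffices and the polynomial bound on computing $Q'$ is sound. Once this is granted, the remaining estimates are routine applications of product and determinisation constructions for tree automata, and the EXPTIME bound follows immediately from the arithmetic above.
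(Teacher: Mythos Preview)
Your proposal is correct and follows essentially the same route as the paper: the paper states Lemma~\ref{in-exp} simply as a summary of the preceding construction, and your argument is precisely the complexity tally of that construction (build each $A_{\psi_i}$, determinise $G$, intersect via product, test emptiness), with the same exponential bound $|A_G|\cdot\prod_i|A_{\psi_i}|$ the paper already records. Your flagged ``main obstacle'' (that in Case~2 any solution $w$ lands $w(s_1,\dots,s_{arity(f)})$ in the $N'$-class of some term already in $C$) is exactly the point the paper argues just before defining $Q_{u=u'}$, so once you grant that paragraph the rest is indeed routine; one small slip is that $A_{s=t}$ in Case~1 adds an $x_i$-transition for each argument of $f$, not a single one, but this does not affect the bound.
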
 %\ben{This is true, but can we actually conclude this from the solution to the  non-empty intersection problem?}

The relationship between regular tree languages and the regular SyGuS-EUF problem is quite deep. 
Using the following lemma and the above constructions, we can see that a tree language is regular if and only if it is the set of solutions to 
a regular SyGuS-EUF problem.

\begin{lemma}
\label{aut_to_formula}
Let $A:=(Q, \Sigma \cup \{x_1,\dots,x_k\}, \delta, Q_F)$ be a tree automaton. 
There exists a regular disjunctive formula $\psi_A$ such that $L(A)$ is the set of solutions to $\psi_A$.
\end{lemma}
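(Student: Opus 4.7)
The plan is to encode the transition structure of $A$ as a set of ground equations over an enlarged signature, and then express ``$w$ is accepted by $A$'' as a single case-1 regular clause. Without loss of generality I assume $A$ is complete on $\Sigma\cup\{x_1,\ldots,x_k\}$, adjoining a fresh non-accepting trap state and totalising $\delta$ into it if necessary; in particular each $\delta(x_i)$ is defined.

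For the construction, I introduce a fresh constant $c_q$ for each state $q\in Q$, disjoint from $\Sigma$. For every transition $\delta(g,q_1,\ldots,q_n)=q'$ with $g\in\Sigma$, I add the ground equation $g(c_{q_1},\ldots,c_{q_n}) = c_{q'}$ to a set $E_A$, and then define
\[
\psi_A \;:=\; \bigvee_{(l=r)\in E_A} \neg(l=r) \;\vee\; \bigvee_{q\in Q_F}\bigl(f(c_{\delta(x_1)},\ldots,c_{\delta(x_k)}) = c_q\bigr),
\]
which is logically the implication $\bigwedge_{e\in E_A} e \rightarrow \bigvee_{q\in Q_F} f(c_{\delta(x_1)},\ldots,c_{\delta(x_k)}) = c_q$. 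By inspection $\psi_A$ is $\ITE$-free, contains at most one occurrence of $f$ per equation, and $f$ appears only positively; hence $\psi_A$ is a regular case-1 disjunction.

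For correctness, a routine induction on ground terms gives $E_A \vdash t = c_{\delta(t)}$ for every $t\in T(\Sigma\cup\{c_q:q\in Q\})$, extending $\delta$ by $\delta(c_q):=q$. Applying Theorem~\ref{aut_thm} to $E_A$ with $C$ the subterm closure of the left-hand sides of $E_A$ reconstructs an automaton isomorphic to $A$, which shows that distinct states yield distinct $=_{E_A}$-classes, i.e.\ $[c_q]_{E_A}\neq[c_{q'}]_{E_A}$ whenever $q\neq q'$. Now for any candidate $w\in T(\Sigma,\{x_1,\ldots,x_k\})$, set $w' := w\{c_{\delta(x_1)}/x_1,\ldots,c_{\delta(x_k)}/x_k\}$; then $\delta(w')=\delta(w)$ and $E_A \vdash w' = c_{\delta(w)}$. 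If $\delta(w)\in Q_F$, the equation $w'=c_{\delta(w)}$ is one of the conclusion disjuncts, so by Birkhoff's Theorem $\psi_A\{w/f\}$ is valid and $w\in L(\psi_A)$. If $\delta(w)\notin Q_F$, the canonical Herbrand model $T(\Sigma\cup\{c_q\})/{=_{E_A}}$ satisfies $E_A$ yet falsifies every conclusion disjunct, because $[w']=[c_{\delta(w)}]\neq[c_q]$ for each $q\in Q_F$; hence $\psi_A\{w/f\}$ is not valid and $w\notin L(\psi_A)$. Therefore $L(\psi_A)=L(A)$.

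The delicate step, and the main potential obstacle, is the second direction: verifying that congruence closure of $E_A$ does not silently merge $[c_q]$ with $[c_{q'}]$ for $q\neq q'$, which could let $w'$ slip into an accepting class to which it does not belong. This is exactly what Theorem~\ref{aut_thm} ensures, since the automaton obtained from $(E_A,C)$ by that construction is in fact $A$ itself; thus the state separation of $A$ is preserved under $=_{E_A}$, and the remainder of the argument is routine bookkeeping.
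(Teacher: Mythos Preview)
Your proof is correct and follows the same overall scheme as the paper: encode the transition relation of $A$ as a set of ground equations and express acceptance as a single case-1 clause of the form $\bigwedge E \rightarrow \bigvee_{q\in Q_F}\bigl(f(\ldots)=\text{rep}_q\bigr)$, then appeal to Birkhoff and the congruence-closure automaton of Theorem~\ref{aut_thm}.

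The one genuine difference is in how states are named. The paper fixes, for each $q\in Q$, a concrete term $u_q$ over $\Sigma\cup\{x_1,\dots,x_k\}$ with $\delta(u_q)=q$, introduces $k$ fresh constants $c_1,\dots,c_k$ to ground out the variables, and writes the positive disjuncts as $f(c_1,\dots,c_k)=u_q\sigma$. You instead introduce a fresh constant $c_q$ per state and use $f(c_{\delta(x_1)},\dots,c_{\delta(x_k)})=c_q$. Your variant is a little cleaner: it does not require the tacit assumption that every state of $A$ is reachable by some term, and it avoids choosing and subterm-closing the set $T_Q$. The paper's variant keeps the arguments of $f$ as $k$ distinct constants regardless of whether some $\delta(x_i)$ coincide, and adds only $k$ rather than $|Q|$ fresh symbols, but neither point matters for the lemma or for the downstream EXPTIME-hardness reduction. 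Your separation argument for the $c_q$'s via Theorem~\ref{aut_thm} is fine; an equally short alternative is to exhibit the model with domain $Q$, interpretation $c_q\mapsto q$, and $g\mapsto\delta(g,-,\dots,-)$, which directly witnesses $c_q\neq_{E_A}c_{q'}$ for $q\neq q'$.
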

\begin{proof}
Let $T_Q$ be a subterm-closed set of terms such that for each state $q \in Q$, there is a term $u_q$ such that $\delta(u_q) = q$.
Without loss of generality, assume that each $u_q \in T_Q$ is a subterm of some term in $L(A)$.
Let $\sigma := \{ x_i \mapsto c_i  \mid i \in \{0,\dots,k\} \}$ for some new constants $c_1,\dots,c_k$.
Let  $N_Q :=  \{ g(u_{q_1}\sigma, \dots, u_{q_r}\sigma) = u_{q'}\sigma \mid r \ge 0, g \in \Sigma_r, q_1,\dots,q_r,q' \in Q, \delta(g, q_1, \dots, q_r) = q' \}$ and $P_Q := \{ f(c_1,\dots,c_k) = u_q \mid q \in Q_F\}$.
Finally set $\psi_A := (\bigwedge_{e \in N_Q} e) \rightarrow (\bigvee_{e \in P_Q} e)$. 
Using the construction from theorem \ref{aut_thm}, it is easy to check that the set of solutions to $\psi$  are precisely $L(A)$. \qed
\end{proof}

%It is interesting to note that we can use the above lemma to reduce the $regular SyGuS-EUF$ problem on a given grammar $G$ of possible solutions to $regular SyGuS-EUF$ where any solution is allowed.
%Simply take the automaton $A$ such that $L(A) = L(G)$ and conjunct $\psi_A$ with 
We can also use the above lemma to show that regular SyGuS-EUF is EXPTIME-complete, as we will see below.

\begin{lemma}
\label{exp-hard}
The regular SyGuS-EUF problem is EXPTIME-hard.
\end{lemma}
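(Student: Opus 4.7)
The plan is to reduce the intersection non-emptiness problem for deterministic bottom-up tree automata --- a standard EXPTIME-complete problem --- to regular SyGuS-EUF. Given tree automata $A_1, \ldots, A_n$ over a common ranked alphabet $\Sigma$, the task is to decide whether $\bigcap_i L(A_i) \neq \emptyset$, and I will build in polynomial time a regular SyGuS-EUF instance whose set of solutions is exactly this intersection.

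First, I would apply Lemma~\ref{aut_to_formula} to each $A_i$ to obtain a regular disjunctive formula $\psi_{A_i}$ over EUF whose SyGuS solutions are exactly $L(A_i)$. Inspecting the construction, each $\psi_{A_i}$ is, after CNF conversion of the implication $(\bigwedge_{e \in N_{Q_i}} e) \rightarrow (\bigvee_{e \in P_{Q_i}} e)$, a single clause $\bigvee_{e \in N_{Q_i}} \neg e \,\vee\, \bigvee_{e \in P_{Q_i}} e$ in which the synthesized symbol $f$ appears only in the positive equations (those of $P_{Q_i}$) and at most once per equation; hence $\psi_{A_i}$ is a case-1 regular disjunction. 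Its size is polynomial in $|A_i|$, since $|N_{Q_i}|$ matches the transition table and $|P_{Q_i}|$ is bounded by the number of accepting states.

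Next, I would form $\phi := \bigwedge_i \psi_{A_i}$, which is still a regular-EUF formula since each conjunct is regular, and take $G$ to be a polynomial-size tree grammar generating all of $T(\Sigma, \{x_1,\ldots,x_k\})$ (one may even fix $k = 0$ so solutions are ground $\Sigma$-terms). A term $w \in L(G)$ is then a solution to $(\phi, \text{EUF}, G, f)$ iff $\psi_{A_i}\{w/f\}$ is valid for every $i$, which by Lemma~\ref{aut_to_formula} is equivalent to $w \in \bigcap_i L(A_i)$. Consequently, the SyGuS problem admits a solution exactly when the intersection is non-empty, transferring EXPTIME-hardness.

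The main thing that needs a sanity check is that assembling several $\psi_{A_i}$'s into one SyGuS-EUF instance introduces no unwanted interaction: because the antecedents $N_{Q_i}$ do not mention $f$ and each consequent constrains only the single function application $f(c_1,\ldots,c_k)$, the validity of $\phi\{w/f\}$ decomposes cleanly across the conjuncts, so overlap between the ground state-representative terms $u_q$ chosen for different automata is harmless. The one non-trivial ingredient is picking an EXPTIME-hard seed problem --- deterministic tree-automaton intersection non-emptiness serves this role because the product construction over tree automata yields an exponential blow-up that cannot be simulated in polynomial space as in the string case; with that in hand, the rest is a direct bookkeeping application of Lemma~\ref{aut_to_formula}.
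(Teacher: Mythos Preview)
Your proposal is correct and matches the paper's proof essentially step for step: reduce from intersection non-emptiness of regular tree automata, apply Lemma~\ref{aut_to_formula} to each $A_i$, conjoin the resulting formulas, and take $G$ to generate all $\Sigma$-terms (the paper simply fixes $f$ to be nullary). Your additional remarks about the polynomial size of $\psi_{A_i}$, its case-1 form, and the non-interaction of the conjuncts are accurate and flesh out details the paper leaves implicit.
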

\begin{proof}
We reduce from the EXPTIME-complete problem of determining whether a set of regular tree automata have languages with a non-empty intersection \cite{veanes1997computational}.
Let $A_1$, \dots, $A_k$ be a set of regular tree automata over some alphabet $\Sigma$.
For each automaton $A_i$, construct the formula $\psi_{A_i}$ as described in lemma \ref{aut_to_formula}.
Let $\phi := \bigwedge_i \psi_{A_i}$.  
Let $f$ be a nullary function symbol to be synthesized, and let $G$ be a grammar such that $L(G) := T(\Sigma)$.
The solutions to the regular SyGuS-EUF problem $(\phi, \Sigma, G, f)$ are the members of the set $\bigcap_i L(A_i)$.
Therefore, $(\phi, \Sigma, G, f)$ has a solution if and only if $\bigcap_i L(A_i)$ is non-empty. \qed
\end{proof}

Using the above lemma and lemma \ref{in-exp}, we can conclude the following theorem.

\begin{theorem}
The regular SyGuS-EUF problem is EXPTIME-complete.
\end{theorem}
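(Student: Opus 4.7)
The plan is simply to invoke the two preceding lemmas. Lemma~\ref{in-exp} already establishes that regular SyGuS-EUF lies in EXPTIME: for each conjunct $\psi_i$ of a regular-EUF formula we build a tree automaton $A_{\psi_i}$ (of size at most exponential in $|\psi_i|$) whose language is precisely the set of replacements $w$ such that $\psi_i\{w/f\}$ is valid, we convert the input grammar $G$ to a bottom-up tree automaton $A_G$ of at most exponential size, and then test non-emptiness of the product $L(A_G) \cap \bigcap_i L(A_{\psi_i})$. The product construction and emptiness check run in time polynomial in the sizes of the component automata, hence exponential overall in $|\phi| + |G|$.

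Lemma~\ref{exp-hard} provides the matching lower bound by a reduction from non-emptiness of intersection of regular tree automata, which is known to be EXPTIME-hard~\cite{veanes1997computational}; the reduction uses the encoding of an arbitrary tree automaton as a regular disjunction supplied by Lemma~\ref{aut_to_formula}, and then conjoins these disjunctions over the input automata. The theorem then follows by combining the two bounds: any problem class that is both contained in EXPTIME and EXPTIME-hard is EXPTIME-complete by definition. No additional argument or construction is needed, so there is no genuine obstacle; the proof is a one-line appeal to Lemmas~\ref{in-exp} and~\ref{exp-hard}.
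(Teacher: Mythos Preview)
Your proposal is correct and mirrors the paper's own proof exactly: the theorem is obtained by combining Lemma~\ref{in-exp} (membership in EXPTIME) with Lemma~\ref{exp-hard} (EXPTIME-hardness), with no additional argument.
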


In concluding this section, we remark that the case-1 and case-2 restrictions on regular clauses are necessary. 
For lack of space, we exclude the details; the appendix contains an example elaborating on this point.
\smvertspace
\section{Finite-Domain Theories}
\label{sec:fd}
%\subsection{A Generic SyGuS Algorithm}
\smvertspace

In addition to the ``standard'' theories, we also consider a family
of theories that we term {\em finite-domain (FD) theories}.
Formally, an FD theory is a complete theory that admits one domain (up to isomorphism), and whose only domain is finite. 
For example, consider group axioms with a constant $a$ and the statements $\forall x: (x = 0) \vee (x = a) \vee (x = a \cdot a)$ and $a \cdot a \cdot a = 0$.  This is an FD theory, since, up to isomorphism, the only model of this theory is the integers with addition modulo 3. 
%Note that completeness implies that all models of the theory have a finite domain if there is one model with a finite domain (consider the family of sentences that state ``there are not $n$ distinct elements'', which can be expressed only using equality, conjunction, and negation). 
% $M \models \varphi \Leftrightarrow M'
%\models \varphi$ holds for all $M, M' \in \text{Mod}(\mathcal{T})$ and
%all first-order logic formulas $\varphi$. 
% 
Also Boolean logic and the theory of fixed-length bit-vectors without concatenation are FD theories. 
Bit-vector theories with (unrestricted) concatenation allow us to construct arbitrarily many distinct constants and are thus not FD theories. 
%Finite-domain theories include, for example, theories of bit-vectors without the concatenation operation (or a bounded version of concatenation). 

In this section we give a generic algorithm for any complete finite-domain theory for which validity is decidable. 
%\markus{Has this algorithm template been used in some SyGuS papers already? We should not miss such citations.}
Let $\mathcal{T}$ be a such a theory and let $M$ be a model of $\mathcal T$ with a finite domain $\textit{dom}(M)$.
%\ben{I don't think this is enough of a requirement, but I'm leaving it here as a placeholder. I want to be able to talk about all models that satisfy T by showing the values of terms as interpreted under a single model. I think this can be done for fixed-width bitvectors. (I think property I want theories to have is "completeness", meaning that each model satisfies all the same formulas. I'm going to look into this and be careful when i state it.)}   
Assume without loss of generality that for every element $c \in \textit{dom}(M)$ there is a constant $f$ in $M$ such that $f^M = c$.

We consider a SyGuS problem with a correctness specification $\varphi$ in theory $\mathcal{T}$, a function symbol $f$ to synthesize, and a tree grammar $G = (N, S, \mathcal F, P)$ generating the set of candidate expressions. 
Let $\mathbf{a} := a_1,\dots ,a_r$ be the constants occurring in $\varphi$. 
%Let $\mathbf{x} := x_0,x_1,\dots ,x_r$ be the variables (i.e. the uninterpreted symbols) occurring in $\varphi$. \markus{what are uninterpreted symbols? We probably mean just \emph{function symbols}. Also, at some point we reserved $x_i$ for the bound variables of $f$.}
%\markus{what are uninterpreted symbols? We probably mean just \emph{function symbols}. Also, at some point we reserved $x_i$ for the bound variables of $f$.}
The expression $e$ generated by $G$ to replace $f$ can be seen as a function mapping $\mathbf a$ to an element in $\textit{dom}(M)$. 
If the domain of $M$ is finite there are only finitely many candidate functions, but it can be non-trivial to determine which functions can be generated by $G$. 
In the following, we describe an algorithm that iteratively determines the set of functions that can be generated by each non-terminal in the grammar $G$. 

For each $V \in N$, we maintain a set $E_V$ of expressions $e$. 
In each iteration and for each production rule $V\to f(t_1,\dots,t_k)$ for $V$ in $G$, we consider the expressions $f(t_1^*,\dots,t_k^*)$ where $t_i^*:=t_i$ if $t_i$ is an expression (i.e. $t_i\in T_{\mathcal F}$) and $t_i^*\in E_{V'}$ if $t_i$ is a non-terminal $V'$.
Given such an expression $e$, we compute the function table, that is the result of $e\{\mathbf c/\mathbf a\}$ for each $\mathbf c\in\textit{dom}(M)^r$, compare it to the function table of the expressions currently in $E_V$.
Our assumption of decidability of the validity problem for $\mathcal T$ guarantees that this operation is decidable. 
If $e$ represents a new function, we add it to the set $E_V$. 

The algorithm terminates, after an iteration in which no set $E_V$ changed. 
As there are only finitely many functions from $\textit{dom}(M)^r$ to $\textit{dom}(M)$ and the sets $E_V$ grow monotonously, the algorithm eventually terminates.
To determine the answer to the SyGuS problem, we then check whether there is an expression $e$ in $E_S$, for which $\varphi\{e/f\}$ is valid.

\begin{theorem}
Let $\mathcal{T}$ be a complete theory for which validity is decidable and which has a finite-domain model $M$. 
The SyGuS problem for $\mathcal{T}$ and $\mathcal T$-compatible tree grammars is decidable. 
%Let $\varphi$ be a constraint with some uninterpreted function $f$. Let $G$ be a function-call tree grammar representing possible replacements to the function $f$. Then the SyGuS problem for $T$ , $\varphi$, $f$, and $G$ is decidable.
\end{theorem}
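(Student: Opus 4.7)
The plan is to verify correctness of the bottom-up fixed-point algorithm sketched in the text by establishing three things: termination of the saturation, soundness and completeness of the final sets $E_V$, and effectiveness of the final validity check on $\varphi\{e/f\}$. Throughout, I will use completeness of $\mathcal T$ to identify validity in $\mathcal T$ with truth in $M$, and the assumption that every element of $\textit{dom}(M)$ is named by a constant, so that for any expression $e$ and tuple $\mathbf c\in\textit{dom}(M)^r$ the value of $e\{\mathbf c/\mathbf a\}$ in $M$ is well-defined and computable using the validity oracle $\VerifyOracle$ for $\mathcal T$.

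For termination, I would observe that the set of functions $\textit{dom}(M)^r\to\textit{dom}(M)$ has cardinality at most $|\textit{dom}(M)|^{|\textit{dom}(M)|^r}$, which bounds each $|E_V|$ since the algorithm stores at most one expression per function. As the $E_V$ grow monotonically and are bounded, a fixed point is reached in finitely many iterations. Deciding whether a freshly constructed expression $e$ computes a new function reduces to checking $e\{\mathbf c/\mathbf a\}=e'\{\mathbf c/\mathbf a\}$ for each $\mathbf c$ and each prior $e'$, which is an instance of validity in $\mathcal T$ and hence decidable by assumption.

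For correctness of the saturation I would establish two invariants. \emph{Soundness}: every $e\in E_V$ is derivable from $V$ in $G$; this follows by induction on the iteration in which $e$ was added, since such an $e$ arises from a production $V\to g(t_1,\dots,t_k)$ with each non-terminal slot filled by an expression already known (inductively) to be derivable from the corresponding non-terminal. \emph{Completeness}: at termination, for every expression $e$ derivable from $V$ there is some $e'\in E_V$ computing the same function on $M$. I would prove this by induction on the height of the derivation tree of $e$. Writing $e=g(e_1,\dots,e_k)$, the inductive hypothesis gives representatives $e_i'$ in the child $E_{V_i}$ with the same function tables as $e_i$; then the candidate $g(e_1',\dots,e_k')$ is considered in some iteration and, having the same function table as $e$, either it or an equivalent prior expression already lies in $E_V$.

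The main obstacle, and the place where the hypotheses on $\mathcal T$ really matter, is justifying that the final check is correct. For each $e\in E_S$ the formula $\varphi\{e/f\}$ is closed and belongs to $\mathcal T$, so its validity is decidable. The key semantic point is that $\varphi\{w/f\}$ depends only on the function $\textit{dom}(M)^r\to\textit{dom}(M)$ computed by $w$: in any evaluation in $M$, each occurrence $f(s_1,\dots,s_k)$ is replaced by $w$ applied to the $M$-values of $s_1,\dots,s_k$, and since $\mathcal T$ is complete, validity in $\mathcal T$ is truth in $M$. Combined with completeness of the saturation, this shows that some $e\in L(G)$ makes $\varphi\{e/f\}$ valid iff some $e\in E_S$ does, so iterating $\VerifyOracle$ over the finite set $E_S$ decides the SyGuS problem.
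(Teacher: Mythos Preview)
Your proposal is correct and follows essentially the same approach as the paper: you are fleshing out the correctness argument for the very saturation algorithm the paper describes, supplying the soundness/completeness invariants and the semantic justification that the paper only sketches. The paper's own argument is terser (it states termination via monotone growth and the finiteness of function tables, and asserts the final validity check without the explicit induction on derivation height), but the underlying method is identical.
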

%\ben{It's also decidable when synthesizing multiple functions}

\begin{table}[t]
\begin{center}
  \begin{tabular}{| c || c | c | c | }
    \hline
    Iteration\# & $E_S$ & $E_A$ & $E_B$ \\ \hline
    1 & none & $x$ & $y$ \\ \hline
    2 & $x \oplus y$ & $\neg y$ & none \\ \hline
    3 & $\neg y \oplus y \equiv \top$ & none 
        & \scell{  
        $(x \oplus y) \oplus \neg y \equiv \neg x$ \\
        \cancel{$(x \oplus y) \oplus x \equiv y$}
        }
        
    \\ \hline
    4 & \scell{$\neg y \oplus \neg x$ \\ $\cancel{\neg x \oplus x \equiv \top }$ }
        & \scell{$\cancel{\neg \neg x \equiv x}$ \\ none}
        & \scell{$\cancel{\top \oplus \neg y \equiv y}$ \\ 
                 $\cancel{\top \oplus x \equiv \neg x }$ \\
                 none}
        \\ \hline
    5 & none & none 
        & \scell {$\cancel{(\neg y \oplus \neg x) \oplus \neg             y \equiv \neg x}$ \\
                  $\cancel{(\neg y \oplus \neg x) \oplus x \equiv y}$ \\
                  none}
    \\ \hline

  \end{tabular}
\end{center}
\caption{This table shows the expressions added to the sets $E_S$, $E_A$, and $E_B$ when we apply the algorithm to the SyGuS problem in Example~\ref{exmp:getrep}.
%Keep in mind that entries of $F_V$ functions from $x$ and $y$ to some boolean value. 
%The expressions are shown for clarity. 
For readability, we simplify the expressions, indicated by the symbol `$\equiv$'. 
Expressions that are syntactically new, but do not represent a new function are struck out.
When no new function is added, ``none'' is written in the cell.}
\label{fig:extab}
\end{table}

\begin{example}
\label{exmp:getrep}
Consider the SyGuS problem over boolean expressions with the specification $\varphi = x \oplus f$, where $\oplus$ denotes the XOR operation and $f$ is the function symbol to synthesize from the following tree grammar (we use infix operators for readability):
\begin{align*}
S &\rightarrow (A \oplus B) \\
A &\rightarrow \neg B \ | \ x\\
B &\rightarrow (S \oplus A) \ | \ y 
\end{align*}

The grammar generates boolean functions of variables $x$ and $y$ and the updates to $E_A$, $E_B$, and $E_S$ during each iteration of the proposed algorithm are given in Table~\ref{fig:extab}. 
The next step in the algorithm is to determine if any of the three expressions $E_S:=\{x\oplus y,\neg y\oplus y,\neg y\oplus\neg x\}$ make the formula $\varphi\{e/f\}$ valid, which is not the case.

%A graphical representation of the grammar is given in Fig.~\ref{fig:gramgraph} to make the analogy to Bellman-Ford more clear.
%Now suppose we want to calculate the expression for $f = x \oplus \neg y$
%Now suppose we want to calculate the expression for $\bold{f} = x \oplus \neg y$ from cell $(5,S)$. We can trace back through the derivation for $f$ to find:
%
%\begin{align*}
%expr(\neg y \oplus \neg x, S) 
%    &= (expr(\neg y, A) \oplus expr(\neg x, B)) \\
%    &= (\neg expr(y, B) \oplus (expr(x \oplus y, S) \oplus expr(\neg y, A))) \\
%    &= (\neg y  \oplus ((expr(x,A) \oplus expr(y,B)) \oplus \neg expr(y,B)))\\
%    &= (\neg y  \oplus ((x \oplus y) \oplus \neg y))
%\end{align*}
\end{example}

%\begin{figure}[t]
%\centering
%\begin{tikzpicture}
%
%\def \bx {4}
%\def \sx {\bx / 2}
%\def \sy {3.2}
%\node(A){$A$} ;
%\node(B) at ($(A) + (\bx, 0)$){$B$} ; 
%\node(S) at ($(A)+(\sx,\sy)$){$S$} ;  
%\draw(A)[thick] (A) circle (0.3cm);
%\draw[thick] (B) circle (0.3cm);
%\draw[thick] (S) circle (0.3cm);
%\draw[->, bend left, very thick](B) to (A);
%\draw(A)[very thick] to[out=45,in=-90] (\bx/2, \sy/2)--(S);
%\draw(B)[->, very thick] to[out=135,in=-90, very thick] (\bx/2, \sy/2)--(S);
%\draw(S)[->, very thick] to[out=-45,in = 90] (B);
%\draw(A)[very thick] to [out = 0, in = 90](B);
%%\draw (A) to[out=90,in=180] (\bx / 4,\sy / 2)--(\bx / 3,\sy / 2) to[out=0,in=-90](S);
%%\draw[->, thick] (B) to[out=90,in=0] (\bx / 1.33,\sy / 2)--(\bx / 1.5,\sy / 2) to[out=180,in=-90](S);
%
%%labels
%\node(Sxor) at ($(A) + (\bx/2 - 0.5, 0.5 + \sy/2)$){$\bigoplus$};
%\node(NotA) at ($(\bx/2, -0.5)$){$\neg$};
%\node(SAxor) at ($(A) + (\bx - 0.5, 0.5 + \sy/2)$){$\bigoplus$};
%
%
%%\draw[rounded corners=2cm,dashed,->] (A)to[out=90,in=-60] ($heh$);
%%\draw[dashed,->] (A) to [in =-90,out =90](B);
%%\draw[dashed,->] (B) to [in =-180,out =90] ++(2,1) to [out=0,in=-70]($Okay$);
%\end{tikzpicture}
%\caption{A graphical representation of the grammar from Example~\ref{exmp:getrep}.}
%\label{fig:gramgraph}
%\end{figure}

%%%%%%%%%%%%%%%%%%%%%%%%%%%%%%%%%%%%%%%%%%%%%%%%%%%%%%%%%%%%%%%%%%%%%%%%%%%%%%%%%%%%%
\smvertspace
\section{Bit-Vectors}
\label{sec:bv}
\smvertspace

%\markus{We have a serious problem in this section: the proof depends on the syntax. Consider LIA, which only consists of the signature $(0,1,+,-,\leq)$, we cannot construct numbers in this way. Compactness issues in Presburger arithmetic may result in expressiveness/complexity issues in SyGuS-LIA.}

In this section, we show that the SyGuS problem for the theory of bit-vectors is undecidable - even when we restrict the problem to tree grammars.
The proof makes use of the fact that we can construct (bit-)strings with the concatenation operation and  can compare arbitrarily large strings with the equality operation. 
This enables us to encode problem of determining if the languages of CFGs with no $\varepsilon$-transitions have non-empty intersection, which is undecidable \cite{hopcroft1979introduction}.
%The next theorem shows that syntax-guided synthesis for the theory of linear integer arithmetic (SyGuS-LIA) and linear real arithmetic (SyGuS-LRA) is still undecidable when given CFGs that only produce well-formed strings. 
%We reduce the problem of determining the emptiness of the intersection of two context-free grammars to the SyGuS problem. 

\begin{theorem}
	The SyGuS problem for the theory of bit-vectors is undecidable for both the class of context-free grammars and the class of BV-compatible tree grammars.
\end{theorem}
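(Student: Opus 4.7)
The plan is a reduction from the non-emptiness of the intersection of two $\varepsilon$-free context-free grammars $G_1,G_2$ over a common terminal alphabet $\Gamma$, which is a classical undecidable problem~\cite{hopcroft1979introduction}. First, I would fix an injective encoding $c:\Gamma\to\{0,1\}^b$ of each terminal as a distinct bit-vector constant of uniform width $b$, extended homomorphically to strings via bit-vector concatenation, setting $\mathrm{enc}(\sigma_1\cdots\sigma_n):=c(\sigma_1)\,\|\,c(\sigma_2)\,\|\,\cdots\,\|\,c(\sigma_n)$. Because the per-character codes are distinct and share a width, two such encodings are equal as bit-vectors if and only if the underlying strings coincide.

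Next, I would turn each $G_i$ into a regular tree grammar $G_i^{\mathrm{tree}}$ whose derivation trees mirror the parse trees of $G_i$ and whose yield, interpreted with $\mathrm{concat}$ at internal nodes and $c(\sigma)$ at leaves, is the bit-vector encoding of the generated string: for every rule $A\to\alpha_1\cdots\alpha_k$ of $G_i$ add $T^{(i)}_A\to\mathrm{concat}(T^{(i)}_{\alpha_1},\ldots,T^{(i)}_{\alpha_k})$ (expanded to nested binary $\mathrm{concat}$), and for every terminal $\sigma\in\Gamma$ add $T^{(i)}_\sigma\to c(\sigma)$. This uses the standard observation that parse trees of a CFG form a regular tree language. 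The SyGuS-BV instance then synthesises a unary function symbol $f$ with one-bit argument $x$, under specification $\varphi:=f(0)=f(1)$, using a BV-compatible tree grammar $G$ that contains both $G_1^{\mathrm{tree}}$ and $G_2^{\mathrm{tree}}$ together with a fresh start symbol $S$ and the single production $S\to\ITE(x=0,\,T^{(1)}_{S_1},\,T^{(2)}_{S_2})$, where $S_i$ is the start of $G_i^{\mathrm{tree}}$. Any $w\in L(G)$ then has the shape $\ITE(x=0,e_1,e_2)$ with $e_i$ evaluating to the encoding of some $s_i\in L(G_i)$, so $\varphi\{w/f\}$ reduces to the bit-vector equation $e_1=e_2$, equivalently $s_1=s_2$. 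Hence a solution exists iff $L(G_1)\cap L(G_2)\neq\emptyset$, proving undecidability for BV-compatible tree grammars; the context-free case follows at once from the remark in Section~\ref{sec:grammars} that every regular tree grammar can be viewed as a CFG.

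The step I expect to be the main obstacle is reconciling this construction with BV's sort discipline, which requires the two branches of an $\ITE$ to share a width, whereas $e_1$ and $e_2$ above are in general of different widths. The cleanest remedy is to view the grammar as generating syntactic terms and to treat width-mismatched $\ITE$ candidates as simply failing validity; this preserves the reduction, since the only way to satisfy $e_1=e_2$ is to pick encodings of the same length, at which point the branches are automatically width-aligned. If strict well-typing is required at the grammar level, I would instead introduce pair non-terminals $[A,B]$ with $A\in N_1,\,B\in N_2$ that synchronise derivations of $G_1$ and $G_2$ sharing a common yield length by structure-matching productions, so that both $\ITE$ branches are width-aligned by construction; the injectivity of $c$ and the equal-width codes keep the per-character equality intact through any such matching.
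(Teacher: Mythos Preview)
Your reduction is the same one the paper uses---non-emptiness of intersection of two $\varepsilon$-free CFGs, with terminals encoded as fixed-width bit-vector constants and concatenation standing in for string juxtaposition---so the core argument is sound. The paper's realisation is simpler, though: it makes $f$ a \emph{nullary} Boolean symbol, lets the grammar's start rule be $S\to (S_1=S_2)$ where $S_1,S_2$ head the two encoded sub-grammars, and uses a trivial specification mentioning only $f$. Thus the grammar directly produces a Boolean equality and no $\ITE$ is needed; the sort of $f$ is fixed to Boolean independently of the widths of the two sides. Your version with $\ITE(x{=}0,e_1,e_2)$ and specification $f(0)=f(1)$ reaches the same conclusion under the paper's arity-only notion of BV-compatibility, but it forces you to confront the branch-width issue and leaves the output sort of $f$ unfixed, which is awkward in a many-sorted theory.

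On your two remedies: the first (let ill-typed candidates simply fail validity) is precisely what the paper's construction does implicitly---its grammar also generates ill-sorted equalities $e_1=e_2$ whenever the two sides have different widths---so that route is fine and matches the paper. The second remedy, however, does not go through. Synchronising two CFG derivations on yield \emph{length} via paired non-terminals $[A,B]$ cannot in general be done by a regular tree grammar (or even a CFG): the yield of such a grammar would have to realise the language $\{u\,\#\,v : u\in L(G_1),\ v\in L(G_2),\ |u|=|v|\}$, and already for $L(G_1)=\{a^nb^n:n\ge 1\}$ and $L(G_2)=c^+$ this is $\{a^nb^n\,\#\,c^{2n}:n\ge 1\}$, which is not context-free. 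So if strict per-term well-typing were demanded, a genuinely different idea would be needed; under the paper's definition of BV-compatibility it is not, and your first remedy suffices.
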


\begin{proof}
	We start with the proof for the class of context-free grammars. 
	Given two context-free grammars $G_1=(N_1,S_1,T_1,R_1)$ and $G_2=(N_1, S_2,T_2,R_2)$, we define a SyGuS problem with a single context-free grammar $G=(N,S,T,R)$ that has a solution iff the intersection of $G_1$ and $G_2$ is not empty.
%	Determining the emptiness of the intersection of two context-free grammars is well known to be undecidable.
	The proof idea is to express the intersection of the two grammars as the equality between two expressions, each generated by one of the grammars.
	The new grammar thus starts with the following production rule:
\[S \rightarrow S_1=S_2\] 

	We then have to translate the grammars $G_1$ and $G_2$ into grammars $G_1'$ and $G_2'$ that produce expressions in the bit vector theory instead of arbitrary strings over their alphabets.
	There is a string produced by both $G_1$ and $G_2$ if and only if the constructed grammars $G_1'$ and $G_2'$ can produce a pair of equal expressions.
	We achieve this by encoding each letter as a bit string of the fixed length $1+\log_2|T_1\cup T_2|$, and by intercalating concatenation operators ($@$) in the production rules: 
	We encode each production rule $(N,P)$ with $P=p_1p_2\dots p_n$ as $(N,P')$ with $P'=p_1' @ p_2' @\dots @ p_n'$, where $p_i'=p_i$ if $p_i\in N$, and otherwise $p_i'$ are the fixed-length encodings of the terminal symbols. 
%	That is, we have to find a class of bit vector expressions for which semantic equivalence is the same as syntactic equivalence. 
%	For LIA, LRA, and DL we transform grammars $G_1$ and $G_2$ into grammars $G_1'$ and $G_2'$ over digits, i.e. with the set of terminals $\{0,\dots,9\}$, while maintaining the equality of each pair of strings produced by the grammars. 
%	We can do so by encoding each terminal in $T_1\cup T_2$ as a sequence of digits of length $1+\log_{10} |T_1\cup T_2|$, where we require the first digit to be 1 to avoid leading zeros. 
	We then define $N=S\dotcup N_1'\dotcup N_2'$,~ $T=\{0,1,@,=\}$, and $R= R_1'\cup R_2' \cup \{(S,S_1'=S_2')\}$. 
	The correctness constraint $\varphi$ of our SyGus problem then only states $\varphi\coloneqq \neg f$, where $f:\mathbb{B}$ is the function symbol, a constant, to synthesize.
	As each character in the alphabets of the context-free grammars was encoded using bit vectors the same length, the comparison of the bit vectors is equivalent to the comparison between the strings of characters of the grammars $G_1$ and $G_2$ and the SyGuS problem has a solution if and only if the intersection of the languages of the context-free grammars $G_1$ and $G_2$ is empty. 

	Note that the context-free grammar $G$ can also be interpreted as a $BV$-compatible tree grammar, where $BV$ is the theory of bit-vectors. 
	Although it is efficiently decidable whether two tree grammars produce a common tree, the expressions produced by the tree-interpretation of $G_1$ and $G_2$ will be equivalent as long as their leaves are equivalent.
	Thus, the equality of the expression trees in the interpretation of the bit vector theory still coincides with the intersection of the given context-free grammars $G_1$ and $G_2$. 
	\qed
\end{proof}

We only used the concatenation operation of the bit-vector theory for the proof. 
That is, SyGuS is even undecidable for fragments of the theory of bit-vectors for which basic decision
problems are easier than the general class; for example, the theory of
\emph{fixed-sized bit-vectors with extraction and composition}~\cite{cyrluk1997efficient} for which
satisfiability of conjunctions of atomic constraints is polynomial-time solvable unlike the general
case which is NP-hard.

\begin{remark}
This proof only relies on the comparison of arbitrarily large values in the underlying logical theory. 
It may thus be possible to extend the proof to other theories involving numbers, such as LIA, LRA, and difference logic. 
The problem here is that these proofs tend to depend on syntactical sugar.
Consider the case of LIA. 
If the signature allows us to use arbitrary integer constants, such as 42, it is simple to translate the proof above into a proof of undecidability of SyGuS for LIA and CFGs. 
For the standard signature of LIA, however, which just includes the integer constants 0 and 1 (larger integers can then be expressed as the repeated addition of the constant 1) the proof scheme above does not apply. 
\end{remark}

%%%%%%%%%%%%%%%%%%%%%%%%%%%%%%%%%%%%%%%%%%%%%%%%%%%%%%%%%%%%%%%%%%%%%%%%%%%%%%%%%%%%%
\smvertspace
\section{Other Background Theories}
\label{sec:misc}
\smvertspace

In this section, we remark on the decidability for some
other classes of SyGuS problems. These results are 
straightforward, but the classes do occur in practice, and so
they are worth mentioning.

%\noindent
%{\bf Linear real arithmetic (LRA) with arbitrary affine expressions.}
\subsubsection*{Linear real arithmetic (LRA) with arbitrary affine expressions.}
Consider the family of SyGuS problems where:
\begin{itemize} 
\item [i)] the specification $\varphi$ is a
Boolean combination of linear constraints over real-valued variables
$\vec{x} := x_1, x_2, \ldots, x_n$ 
and applications of the function $f$ to be synthesized. For
simplicity, we assume a single function $f$ of arity $n$; the
arguments below generalize. 
\item [ii)]
The grammar $G$ is the one generating {\em arbitrary affine expressions} 
over $\vec{x}$ to replace for applications of $f$. 
Thus, the application $f(\vec{t})$,
where $\vec{t} := t_1, t_2, \ldots, t_n$ is a vector of LRA terms, is
replaced by an expression of the form $a_0 + \sum_{i=1}^n a_i t_i$. 
\end{itemize}
Thus, for a fixed set of variables $\vec{x}$ there is a fixed grammar
for all formulas $\varphi$.

This case commonly arises in invariant synthesis when the invariant is
hypothesized to be an affine constraint over terms in a program. 
In this case, the solution of the SyGuS problem reduces to solving the
$\exists \forall$ SMT problem 
$$
\exists a_0,a_1,\ldots,a_n \, . \, \forall x_1 x_2 \ldots x_n \, . \,
\bigl( \varphi [f(\vec{t}) / a_0 + \sum_{i=1}^n a_i t_i] \bigr)
$$
which reduces to a formula with first-order quantification over real
variables. Since the theory of linear real arithmetic admits
quantifier elimination, the problem is solvable using any of a number
of quantifier elimination techniques, including classic methods such
as Fourier-Motzkin elimination~\cite{dantzig-jct73} 
and the method of Ferrante and Rackhoff~\cite{ferrante-sicomp75},
as well as more recent methods for solving exists-forall SMT problems
(e.g.,~\cite{dutertre2015solving}). 

%- LIA/LRA is decidable for unrestricted expressions
%\cite{reynolds-cav15} \ben{This is the citation to the tinelli,Barret,Cav'15 paper.}
This decidability result continues to hold for grammars that generate bounded-depth
conditional affine expressions. However, the case of unbounded-depth
conditional affine expressions is still open, to our knowledge.

A similar reduction, for the case of affine expressions, can be
performed for linear arithmetic over the integers (LIA), requiring
quantifier elimination for Presburger arithmetic. Thus, this case is
also decidable.

%\noindent
%{\bf Finite-precision bit-vector arithmetic (BV) with arbitrary bit-vector
%functions.}
\subsubsection*{Finite-precision bit-vector arithmetic (BV) with arbitrary bit-vector
functions.}
Consider the family of SyGuS problems where:
\vspace{-2pt}
\begin{itemize}
\item [i)] the specification $\varphi$ is an
arbitrary formula in the quantifier-free theory of finite-precision
bit-vector arithmetic~\cite{barrett-smtbookch09,BarFT-SMTLIB} over
a collection of $k$ bit-vector variables whose cumulative bit-width is
$w$. Let $f$ be a bit-vector function to be synthesized with output
bit-width $m$.
\item [ii)]
The grammar $G$ is the one generating {\em arbitrary bit-vector
expressions} over these $k$ variables, using all the operators defined
in the theory. In other words, $G$ imposes no major syntactic
restriction on the form of the bit-vector function $f$.
\end{itemize}
Thus, for a fixed set of bit-vector variables there is a fixed grammar
for all formulas $\varphi$.

This class of SyGuS problems has been studied as the synthesis of
``bitvector programs'' (in applications such as code optimization
and program deobfuscation) from
components (bit-vector operators and
constants)~\cite{jha-icse10,gulwani-pldi11}.  
It is easy to see that this class is decidable.
A simplistic (but not very efficient) way to solve it is to enumerate
all $2^{m2^w}$ possible semantically-distinct bitvector functions over
the $k$ variables and check, via an SMT query, whether each, when substituted
for $f$ will make the resulting formula valid.

%%%%%%%%%%%%%%%%%%%%%%%%%%%%%%%%%%%%%%%%%%%%%%%%%%%%%%%%%%%%%%%%%%%%%%%%%%%%%%%%%%%%%
\smvertspace
\section{Discussion}
\label{sec:discuss}
\smvertspace

In this paper, we have presented a first theoretical analysis of the SyGuS problem,
focusing on its decidability for various combinations of logical theories and grammars.
The main results of the paper are summarized in Table~\ref{tbl:main-results-summary},
augmented by the decidability of the simple but common SyGuS classes
described in Section~\ref{sec:misc}. We conclude with a few remarks about
the results, connections between them, and their relevance in practice.

Consider the theory of finite-precision bit-vector arithmetic (BV). 
We have seen in Section~\ref{sec:misc} that the SyGuS problem is decidable when the logical formula is an arbitrary BV formula and the grammar allows the function to be replaced by any bit-vector function over the constants in the formula. %, but not by something more restrictive. 
However, we have also seen that the SyGuS problem is undecidable when an arbitrary context-free grammar can be used to
restrict the space of bit-vector functions to be synthesized (see Section~\ref{sec:bv}). 
%In the latter case, we are allowing many more restrictive grammars. 
These results may seem to contradict the intuition (stated in Section~\ref{sec:intro}) that syntax guidance 
restricts the search space for synthesis and thus makes the problem easier to solve. 
We thus have to be careful which classes of grammars we pick to restrict SyGuS problems. % and pick grammars that represent restrictions that are \emph{useful} for the decision procedures. 
%One possible interpretation of the results 
%is that it is not just sufficient to allow (arbitrarily) restrictive grammars but also to pick the {\em right restrictions}. 
%In less formal terms, ``too much freedom'' to choose a restrictive grammar can make the SyGuS problem harder.

%In this paper we have seen that many SyGuS problems which are decidable when a function can be replaced by any expression become undecidable when the replacement expressions are constrained to fit a grammar. This may not be so surprising since so many problems for formal grammars are undecidable. In cases when the synthesized expression must fit a certain structure, the loss of decidability is understandable. When any replacement expression would be allowed, however, it may not be clear why a grammar should be provided at all. It is important to realize, though, that making a problem undecidable by allowing for the use of grammars does not mean that the problem will be more difficult in practice.

%\subsection{Future Work}

For future work, it would be good to study the LIA and LRA background theories
in more detail. In particular, we would like to determine if these
theories are decidable when grammars are provided, and whether the use
of conditionals without bounding expression tree depth affects the decidability.  
Further, for SyGuS classes that are decidable, it would be useful to perform
a more fine-grained characterization of problem complexity, especially with
regard to special classes of grammars.

% Only in final version:
%\paragraph{Acknowledgements.} This work was supported in part by NSF Expeditions project CCF-1139138 and by NSF awards \#1329759 and \#1139138.
%\vspace{-.2cm}

%%%%%%%%%%%%%%%%%%%%%%%%%%%%%%%%%%%%%%%%%%%%%%%%%%%%%%%%%%%%%%%%%%%%%%%%%%%%%%%%%%%%%
\bibliographystyle{plain}
\bibliography{bibliography}

\newpage
\appendix

\section{Omitted details from Sec.~\ref{sec:reg-euf}}

At the end of Sec.~\ref{sec:reg-euf}, we remarked that the case-1 and case-2 restrictions on regular clauses are necessary. 
The following example gives a clause that includes one positive and one negative equation in which an $f$ appears. 
The set of solutions to the corresponding SyGuS-EUF problem is not a regular tree language.
more specifically:

Let $\Sigma := \{ g\!:\!1, g'\!:\!1, h\!:\!1, a\!:\!0, b\!:\!0, c\!:\!0 \}$ be a ranked alphabet, and let $f$ be a unary function symbol to be synthesized. %\!=\!\ben{can we make the spacing around $\!:\!$ and $\!=\!$ smaller?}
Let $N := \{ f(a)\!=\!b, g(a)\!=\!a, g'(a)\!=\!a, h(a)\!=\!b, h(b)\!=\!c, g(c)\!=\!c, g'(c)\!=\!c\}$ and $\phi := (\bigwedge_e \in P e) \rightarrow f(b)\!=\!c$.  
Define $G$ to be the tree grammar with start symbol $S$ and the following rules: $S \rightarrow g(S) | g'(S) | h(A)$ and $A \rightarrow g(A) | g'(A) | h(A) | x$.
We will show that the set of solutions to the regular SyGuS-EUF problem $(\phi, \Sigma, G, f)$ is not a regular tree language.

Let $w(x) \in L(G)$ be a replacement to $f$ and $E' := E\{w/f\}$.
By the rules of $G$, there must be a context $B$ and a term $t(x)$ over the alphabet $\{g\!:\!1, g'\!:\!1\}$ such that $w(x) \!=\! B[h(t(x))]$.
We can see that $b =_{E'} w(a) =_{E'} B[h(t(a))] =_{E'} B[h(a)] =_{E'} B[b]$.
Also, $w(b) =_{E'} c \Leftrightarrow h(t(b)) =_{E'} c \Leftrightarrow  t(b) =_{E'} b$.
The terms $t(x)$ such that $t(b)\!=\!b$  are precisely those of the form  $B[B[\dots B[x]\dots]]$. %\ben{need more explanation here?}
Therefore, the set of solutions to the above regular SyGuS-EUF is $L := \{ B[h(B[B[\dots B[x]\dots]])] \mid B \mbox{ is any context over } \{g\!:\!1. g'\!:\!1\}\}$.

We now use the Myhill-Nerode theorem for regular tree languages \cite{kozen1992myhill}, stated below: 
\begin{theorem}[Myhill-Nerode theorem for regular tree languages~\cite{tata2007}]
\label{thm:MNregulartreelanguages}
Given a tree language $L$ over ranked alphabet $\Sigma$, we define $s \equiv_L t$ if $C[s] \Leftrightarrow C[t]$ for each context $C$ and terms $s$ and $t$ over $\Sigma$. 
The following are equivalent:
\begin{enumerate}
\item $L$ is regular
\item $\equiv_L$ has finitely many equivalence classes
\item $L$ is accepted by a rational tree automaton. 
\end{enumerate}	
\end{theorem}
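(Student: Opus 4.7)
The plan is to prove the three-way equivalence by establishing $(1) \Leftrightarrow (3)$ and $(3) \Leftrightarrow (2)$. The first equivalence is essentially immediate from standard material already cited in the Preliminaries: regular tree grammars and bottom-up (rational) tree automata describe the same class of languages \cite{tata2007}, so $L$ is regular iff $L = L(A)$ for some rational tree automaton $A$. The substantive content is thus the equivalence of $(2)$ and $(3)$, which is the genuine Myhill-Nerode characterization.

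For the direction $(3) \Rightarrow (2)$, I would take a deterministic bottom-up automaton $A = (Q, \Sigma, \delta, Q_F)$ with $L(A) = L$ and show that $\delta(s) = \delta(t)$ implies $s \equiv_L t$. The key observation is that for any context $C$, the value $\delta(C[s])$ depends on $s$ only through $\delta(s)$; this follows by a straightforward induction on the structure of $C$, using determinism of $\delta$. Hence $\delta(C[s]) = \delta(C[t])$, and membership in $L$ is decided by $Q_F$, so $C[s] \in L \Leftrightarrow C[t] \in L$. It follows that $\equiv_L$ has at most $|Q|$ equivalence classes.

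For the direction $(2) \Rightarrow (3)$, I would construct the canonical automaton $A_L = (Q_L, \Sigma, \delta_L, Q_{L,F})$ by setting $Q_L$ to be the (finitely many) $\equiv_L$-equivalence classes, $\delta_L(g, [s_1], \dots, [s_k]) := [g(s_1, \dots, s_k)]$, and $Q_{L,F} := \{[s] : s \in L\}$. The main obstacle, and the one place where care is needed, is verifying well-definedness of $\delta_L$: if $s_i \equiv_L s_i'$ for each $i$, one must show $g(s_1, \dots, s_k) \equiv_L g(s_1', \dots, s_k')$. I would argue by swapping one argument at a time: given any context $C$ testing equivalence of the two terms, observe that for each $i$ the term $C' := C[g(s_1', \dots, s_{i-1}', x, s_{i+1}, \dots, s_k)]$ is itself a context (single occurrence of $x$), so $C'[s_i] \in L \Leftrightarrow C'[s_i'] \in L$ by $s_i \equiv_L s_i'$. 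Chaining these $k$ single-swap equivalences yields the claim. Well-definedness of $Q_{L,F}$ follows from $\equiv_L$ preserving membership under the trivial context $C = x$.

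Once well-definedness is in hand, a routine induction on ground terms shows $\delta_L(s) = [s]$, whence $s \in L(A_L) \Leftrightarrow [s] \in Q_{L,F} \Leftrightarrow s \in L$. This completes $(2) \Rightarrow (3)$ and closes the equivalence. The only non-mechanical step is the swap-one-argument-at-a-time argument for well-definedness of $\delta_L$; everything else is bookkeeping on top of the string Myhill-Nerode template.
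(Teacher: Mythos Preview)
Your argument is correct and is the standard proof of the tree Myhill--Nerode theorem. However, there is nothing to compare it against: the paper does not prove this statement at all. It is quoted as a known result from the literature (cited to \cite{tata2007}, with \cite{kozen1992myhill} mentioned in the surrounding text) and then applied to conclude that the solution set $L$ in the appendix example is not regular.

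One small technical remark on your $(3)\Rightarrow(2)$ direction: the paper's automata have a \emph{partial} transition function $\delta$, so it is not quite enough to say that $\delta(s)=\delta(t)$ forces $s\equiv_L t$ and hence there are at most $|Q|$ classes. You should also note that if $\delta(s)$ is undefined then $\delta(C[s])$ is undefined for every context $C$ (by the same structural induction), so $C[s]\notin L$ always; hence all such terms fall into one additional $\equiv_L$-class. This gives at most $|Q|+1$ classes, still finite. Apart from this bookkeeping point, your proof is complete.
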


Using this theorem, it is easy to check that $L$ is not regular.

\end{document}